\documentclass[journal]{IEEEtran}
\usepackage[font=footnotesize,labelfont=bf]{caption}
\usepackage{amsmath}
\usepackage{booktabs}
\usepackage{cite}

\usepackage[inline]{enumitem}
\usepackage[export]{adjustbox}
\usepackage{wrapfig}
\usepackage{subfigure}

\usepackage{amssymb,amsfonts} 
\usepackage{amsthm} 
\usepackage{commath} 
\usepackage{algorithmic}
\usepackage{mathtools, nccmath}

\usepackage{url}
\usepackage{array}
\usepackage{textcomp}
\usepackage[inline]{enumitem}
\usepackage{xcolor}
\usepackage{comment}  
\usepackage[final]{pdfpages}


\usepackage{setspace}


\newcommand{\dosint}[2]{%
  \ifx#1\displaystyle
    \displaysint
  \else
    \normalsint{#1}%
  \fi
}
\newcommand{\displaysint}{\displaystyle\mathsf{s}\mkern-18mu}
\newcommand{\normalsint}[1]{%
  \smallers{#1}\ifx#1\textstyle\mkern-9mu\else\mkern-8.2mu\fi
}
\newcommand{\smallers}[1]{%
  \vcenter{\hbox{$\ifx#1\textstyle\scriptstyle\else\scriptscriptstyle\fi\mathsf{s}$}}%
}



\singlespacing
\graphicspath{{./figures/}}

\let\OLDthebibliography\thebibliography
\renewcommand\thebibliography[1]{
  \OLDthebibliography{#1}
  \setlength{\parskip}{0pt}
  \setlength{\itemsep}{0pt plus 0.3ex}
}

\begin{document}
\theoremstyle{definition}

\newtheorem{theorem}{Theorem}

\theoremstyle{definition}
\newtheorem{definition}{Definition}

\theoremstyle{definition}
\newtheorem{corollary}{Corollary}

\theoremstyle{definition}
\newtheorem{proposition}{Proposition}

\theoremstyle{definition}
\newtheorem{lemma}{Lemma}

\theoremstyle{definition}
\newtheorem{claim}{Claim}

\theoremstyle{definition}
\newtheorem{conjecture}{Conjecture}

\theoremstyle{definition}
\newtheorem{observation}{Observation}

\title{Fast-Response \\ Variable Frequency DC-DC Converters Using\\ Switching Cycle Event-Driven Digital Control}
\date{}
\author{Xiaofan~Cui,~\IEEEmembership{Student Member,~IEEE,}
        and~Al-Thaddeus~Avestruz,~\IEEEmembership{Senior Member,~IEEE}
\thanks{
The discussion and data are presented in part at the IEEE Energy Conversion Congress and Exposition, Baltimore, MD, USA, September 2019, in part by the 19th Workshop on Control and Modeling for Power Electronics, Padua, Italy, June 2018, and in part by the 2019 American Control Conference, Philadelphia, PA, USA, July 2019. This article presents new advances in theory and analysis.

The authors are with the Department of Electrical and Computer Engineering, University of Michigan, Ann Arbor,
MI, 48109 USA (e-mail: cuixf@umich.edu; avestruz@umich.edu).
}
\thanks{Manuscript received xx, 2021; revised xx, 2021.}}

\markboth{ \hspace{1in}  {\bf MANUSCRIPT}}%
{Shell \MakeLowercase{\textit{et al.}}: Bare Demo of IEEEtran.cls for IEEE Journals}

\maketitle
\thispagestyle{headings}
\pagestyle{headings}

\renewcommand{\figurename}{Fig.}
\begin{abstract}
This paper investigates a new method to model and control variable-frequency power converters in a switching-synchronized sampled-state space for cycle-by-cycle digital control. 
There are a number of significant benefits in comparison to other methods including fast dynamic performance together with ease of design and implementation. 
Theoretical results are presented and verified through hardware, and simulations of a current-mode buck converter with constant on-time and a current-mode boost converter with constant off-time. 
Dynamic voltage scaling for microprocessors and LiDAR are among the applications that can benefit.
\end{abstract}

\begin{IEEEkeywords}
non-uniform sampling,
variable frequency dc-dc converters, cycle-by-cycle digital control, constant-on-time, current-mode, sampled-data model, switching-synchronized, 
event-driven control, switching-synchronized sampled-state space (\emph{5S}), LiDAR, dynamic voltage scaling.
\end{IEEEkeywords}
\section{Introduction}    
\label{sec:Intro}
\IEEEPARstart{E}{fficiency} is achieved by power converters that provide the exact energy at the exact time. For example, in an electronic system whose power demand fluctuates rapidly, dynamic voltage scaling (DVS) is a crucial and widely used technique to optimize energy efficiency. However, traditional controllers are generally not able to provide both high speed dynamic response and programmable flexibility at the same time: for example, analog controllers lack the flexibility in speedily tuning controller parameters for varying output voltage levels.
Among these, constant\nobreakdash-frequency dc\nobreakdash-dc converters cannot be controlled cycle\nobreakdash-by\nobreakdash-cycle in a simple way over a wide voltage conversion ratio. Even existing variable frequency power converters with digital controllers have extraordinary complexities in architecture, hardware, algorithms, or sensitivities to parameter variations. A new digital control framework is needed to overcome these pervasive limitations.

There are a number of loads that are dynamically demanding.  Processors require higher voltages during intensive tasks \cite{peng2013instruction}.  Memories  are able to work at a lower voltage supply when read/write bandwidth is less \cite{david2011memory}. Wireless network modules choose working voltages based on communication channel conditions and packet throughput \cite{homchaudhuri2016dynamic}. 
LiDAR (Light Detection and Ranging) can dynamically adjust the transmitter power to support the efficient operation \cite{Cui2019c}.
DVS for these kinds of electronic loads can tremendously reduce the world's annual electrical power consumption \cite{david2011memory} \cite{sverdlik_2016} and significantly extend the battery life of portable devices \cite{carroll2010analysis}.

Recent literature suggests that to optimize a system's energy efficiency, DVS voltage regulators should switch among a large number of voltage levels and within a sizable voltage range \cite{peng2013instruction, hashimoto2018}. A voltage reference tracking time of the order of 10 $\mu$s or shorter is the goal for DVS for processor voltage regulator modules  \cite{burd2000dynamic}. 
This criterion is faster than most of the state\nobreakdash-of\nobreakdash-the\nobreakdash-art dc\nobreakdash-dc converters.

The LiDAR sensor is widely used as the ``eyes'' of autonomous ground and airborne vehicles \cite{Sullivan2016} because of its high accuracy in long\nobreakdash-range detection and low sensitivity to ambient interference.
Automatic power control of LiDAR is an emerging technology where the power consumption of laser transmitters are dynamically adjusted to improve the LiDAR sensors' detection accuracy and thermal management \cite{BoehmkePAUS2017, velodynehdl64E, Cui2019c}. 
Fig. \ref{fig:lidarpowersupply} is a typical LiDAR transceiver system. The reflected signal power varies according to many factors such as the reflection rate and distance of obstacles, and the environmental conditions. If the reflected signal power is near the noise equivalent power \cite{NEPDef} of the detector system, the peak of the next forward laser pulse needs to be increased to prevent overlooking obstacles. If the reflected power is too high, it may saturate the optical detector and cause the loss of measurement. This problem is traditionally solved by reducing the detector sensitivity by lowering bias \cite{Bickman2005}. However, in this method, the laser diode in the transmitter always has to operate at high power, hence it increases the junction temperature, reduces the diode lifetime, and requires more thermal management. A better alternative is to reduce the peak intensity of the laser
\cite{Osioptoelectronics2009}.\\
\begin{figure*}
    \centering
    \includegraphics[width=14cm]{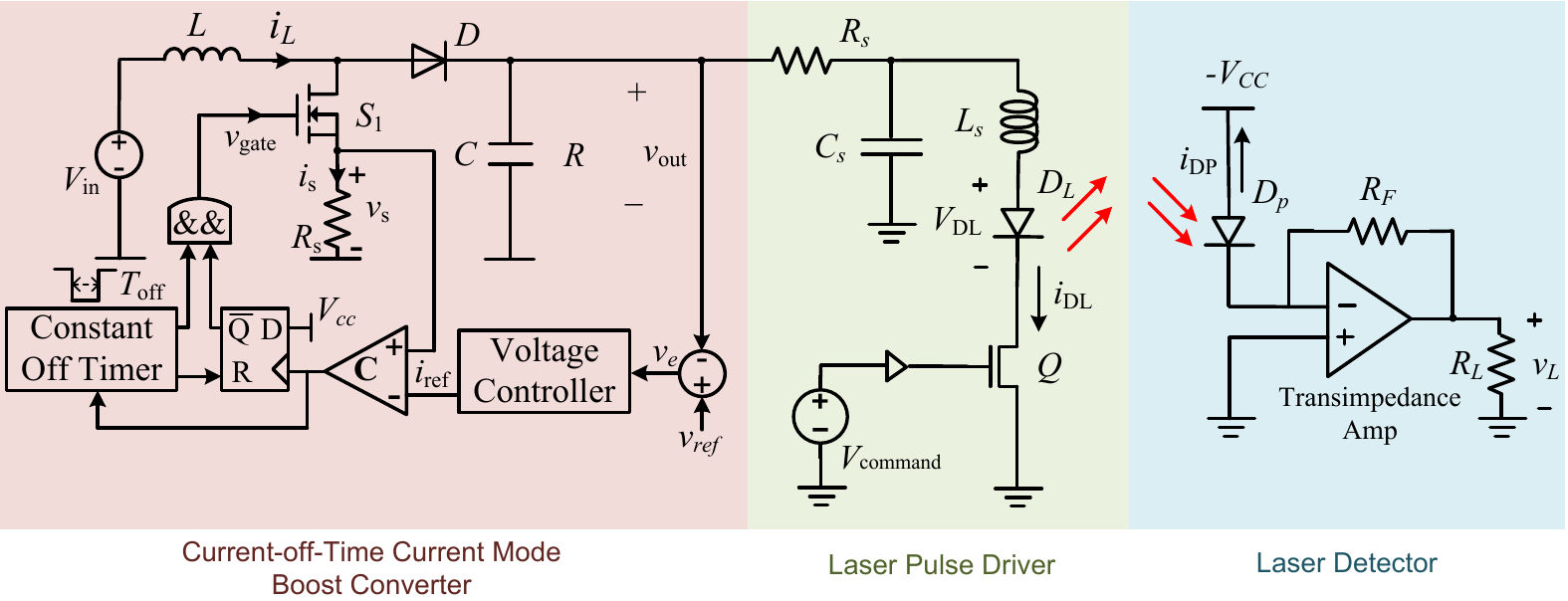}
    \caption{\label{fig:lidarpowersupply} Schematics of the constant off\nobreakdash-time, current\nobreakdash-mode boost power supply and LiDAR transceiver system.}
\end{figure*}

A boost converter is a good candidate for supplying the needed high voltage to LiDAR transmitters. A high\nobreakdash-performance boost LiDAR power supply needs a large control bandwidth to adjust the voltage level within 10$\,\mu$s \cite{velodynehdl64E} over a wide output voltage range and good load disturbance rejection to handle the instantaneous power consumption. The dynamics of power converters vary with operating point \cite{erickson2007}. 
Real\nobreakdash-time tuning can be easily realized on a digital controller to ensure consistently good dynamic performance, in contrast to a single analog compensation network.

A promising dc\nobreakdash-dc converter for DVS is 
(1) digitally controlled, (2) current\nobreakdash-mode, and (3) constant on (off)\nobreakdash-time.
First, digital control allows programmable flexibility for changes in electrical dynamics with different operating points from adaptive voltage step tracking. In comparison, a dc\nobreakdash-dc converter with an analog controller is typically built with fixed compensation; therefore, good performance cannot be always reached over a wide range of voltage and load \cite{liu2009dynamic}.
Second, current\nobreakdash-mode dc\nobreakdash-dc converters are faster and more easily compensated than voltage\nobreakdash-mode because current\nobreakdash-mode converters are first\nobreakdash-order systems \cite{erickson2007}. 
Third, constant on(off)\nobreakdash-time operation in dc\nobreakdash-dc converters does not need the additional complication of slope compensation \cite{Redl1981a} while ensuring \emph{settling} to the commanded current for all operating points \cite{cmpartone2022, cmparttwo2022}.

A common approach to the design of digital controllers for power converters is based on the framework of physical time.  The switched\nobreakdash-system is converted to a time\nobreakdash-invariant system through averaging, which is then transformed to the $s$\nobreakdash-domain to design a controller.  Bilinear transforms are used to convert this controller to the $z$\nobreakdash-domain \cite{pidaparthy2015push, priewasser2014}. Averaging strategies are more complicated for variable switching\nobreakdash-frequency converters because intervals for cycle\nobreakdash-averages are non\nobreakdash-uniform and other methods are often used \cite{Li2010}; the dynamic response of these converters are ultimately limited by  the longest switching period, which can happen during a transient.

Digital hardware complexity often increases with both dynamic response and switching frequency because of digital circuit averaging or algorithms that require high fidelity reconstruction of waveforms.  For these, a separation of time scales between the switching frequency and ADC (analog\nobreakdash-to\nobreakdash-digital converter) sampling rate is needed \cite{priewasser2014}, which often requires oversampling.

We provide a new framework to perform digital control on variable frequency power converters with faster dynamics without the computation, algorithmic, and hardware complexity in prevailing approaches. In comparison to the traditional periodic sampling and control framework, our new digital control framework shown in Fig. \ref{eventdrivenframe} includes a series of non\nobreakdash-periodic sampling and control actions, which are triggered by events instead of clocks.
\begin{figure}
\centering
\includegraphics[width=7cm]{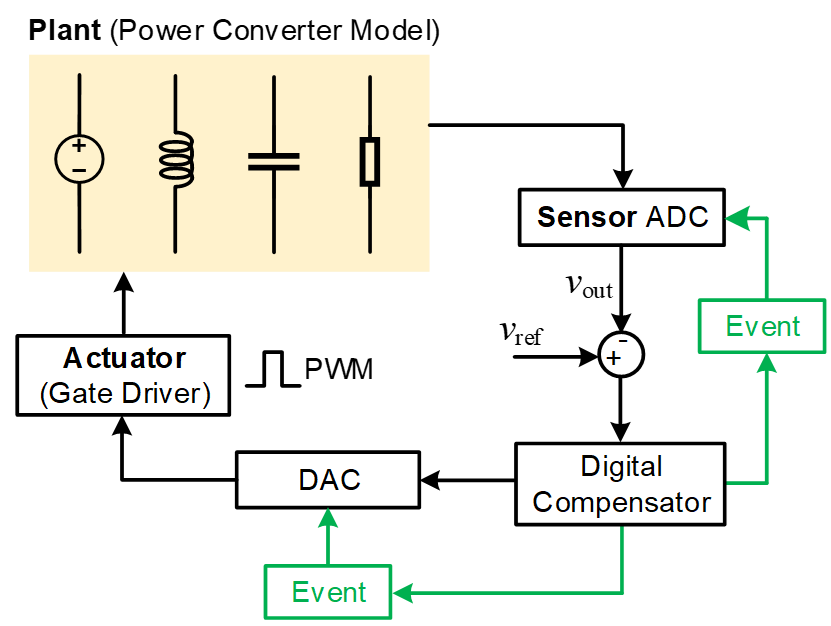}
\caption{\label{eventdrivenframe}Switching\nobreakdash-synchronized control framework for power converters.} 
\end{figure}

Unlike the traditional discrete\nobreakdash-time state space, which uses periodic sampling or interpolation to obtain a uniform correspondence to a continuous\nobreakdash-time state space, our control framework relies on a new state space called {\em switching\nobreakdash-synchronized sampled\nobreakdash-state space} ({\em 5S}) \cite{Cui2018a, xiaofanacc2019}.
In \cite{Cui2018a} and \cite{xiaofanacc2019}, the derivations use an invariant current ramp, which is valid for small output voltage steps. In this article, we advance the theory for large output voltage steps, which is a significant update.
We mathematically prove that the dynamics in {\em 5S} can fully reconstruct the physical circuit dynamics in the time domain, specifically no information is lost and there is a bijective mapping between {\em 5S} and continuous\nobreakdash-time state space.  We also mathematically prove that a controller designed in the time domain is equivalent to a controller design in {\em 5S}, under settling time and overshoot constraints; design optimization in {\em 5S} is equivalent to optimization in the time domain.

Both direct digital design and controller implementation can be performed in {\em 5S}. Because no approximate state space transformation is needed, unlike other methods, cycle\nobreakdash-by\nobreakdash-cycle digital control is precise.  Also, because the $z$\nobreakdash-transform can still be employed, power\nobreakdash-converter models can be derived as simple linear, low\nobreakdash-order systems, in contrast to more complicated plant models from the continuous\nobreakdash-time\nobreakdash-derived $s$\nobreakdash-domain.  Familiar classical control methods for direct\nobreakdash-digital design of controllers can be used.  Because sampling is only as fast as the {\em local} switching frequency, both the burden on the ADC and the digital hardware for computation is alleviated.  The naturally\nobreakdash-synchronized switch also allows sampling to be chosen to avoid switching transients.

In this paper, we present the theoretical framework and demonstrate the real\nobreakdash-life validity with a current\nobreakdash-mode buck converter with a constant on\nobreakdash-time using a switching synchronized controller.
Moreover, we apply and demonstrate the {\em 5S} controller design method in a constant\nobreakdash-off\nobreakdash-time current\nobreakdash-mode (COT\nobreakdash-CM) boost voltage regulator prototype that operates in CCM with a peak switching frequency of 3\,MHz. It is designed to nominally deliver 16\,W of power from a 12\,V vehicle battery to a 40\,V LiDAR transmitter array. The cycle\nobreakdash-by\nobreakdash-cycle digital control at this frequency and power level has not to our knowledge been reported in literature. We illustrate a COT\nobreakdash-CM boost model which matches the simulation to within 6\% error in a voltage step response test.  
The converter using the digital S\textsuperscript{2}PI controller shows a rise time of 5\,$\mu$s for a reference voltage step at several operating points ranging from 50\,\% to 100\,\% of the nominal voltage. The prototype shows a 2.5\,\% voltage deviation under a 40\,\% load step disturbance. The hardware results correspond to theory. This power supply prototype is well\nobreakdash-suited for next\nobreakdash-generation autonomous vehicle LiDAR.

This paper is organized as the following: \renewcommand\labelenumi{(\theenumi)}
\begin{enumerate*}
\item Section I introduces the paper; 
\item  Section II discusses the modeling in a switching\nobreakdash-synchronized sampled state space;
\item  Section III explains the switching\nobreakdash-synchronized sampled\nobreakdash-state space control concepts for dc\nobreakdash-dc converters;
\item  Section IV investigates the digital controller design in $\emph{5S}$;
\item Section V exhibits the hardware implementations and experimental results;
\item Section VI concludes the paper.
\end{enumerate*}
\section{Modeling in a Switching-Synchronized Sampled-State Space} \label{sec:modeling_5s}
By using a discrete\nobreakdash-event state space that is synchronized to switching actions and using a\nobreakdash-priori information about state trajectories, perfect reconstruction can be attained at sub\nobreakdash-Nyquist rates, which results in efficient sampling and control.  In this framework, each switching cycle consists of a {\em single} sampling event and a corresponding {\em single} control action. Because the switching interval is varying, sampling and control event intervals are time\nobreakdash-varying, yet always synchronized to switching events. A new sampled state\nobreakdash-space, which is extracted from these non\nobreakdash-periodically sampled states, is a departure from the traditional discrete\nobreakdash-time state space, which has a uniform correspondence to a continuous\nobreakdash-time state space. We term this new state\nobreakdash-space {\em 5S} for {\em switching\nobreakdash-synchronized sampled-state space}. We can show that the dynamics in {\em 5S} represents the true time\nobreakdash-domain circuit dynamics through reconstruction despite sampling well below the Nyquist rate. We illustrate this new digital control framework through (1) a valley\nobreakdash-current\nobreakdash-mode buck converter with constant on\nobreakdash-time and (2) peak\nobreakdash-current\nobreakdash-mode boost converter with constant off\nobreakdash-time.

\subsection{Operation of the Constant On\nobreakdash-Time Current\nobreakdash-Mode Buck Converter}
A COT\nobreakdash-CM buck converter is illustrated in Fig.\,\ref{circuitdiagram_buck}.
\begin{figure}[htbp]
	\centering
	\includegraphics[width = 7cm]{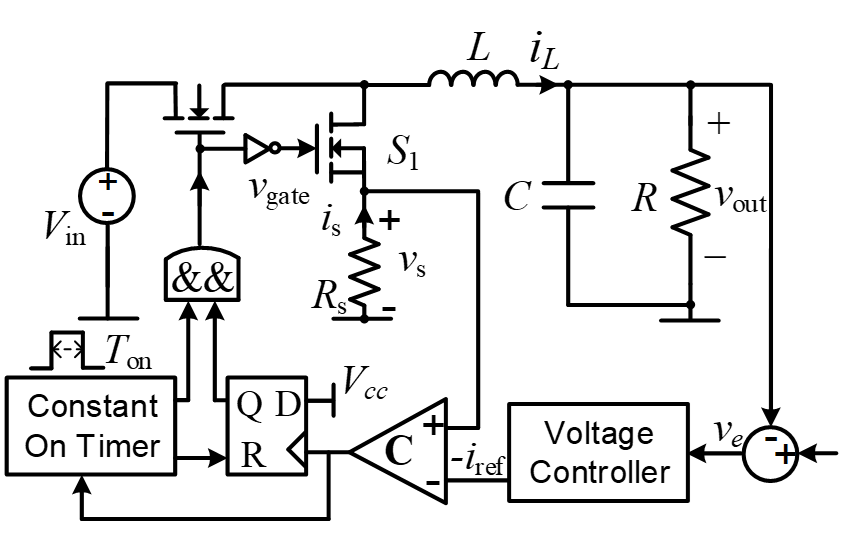}
	\caption{Circuit diagram of a digitally\nobreakdash-controlled constant\nobreakdash-on\nobreakdash-time current\nobreakdash-mode buck converter.}
	\label{circuitdiagram_buck}
\end{figure}
The annotated waveforms for inductor current and capacitor voltage are shown in Fig.\,\ref{ebuckderivation}.
\begin{figure}[htbp]
    \centering
    \includegraphics[width = 7cm]{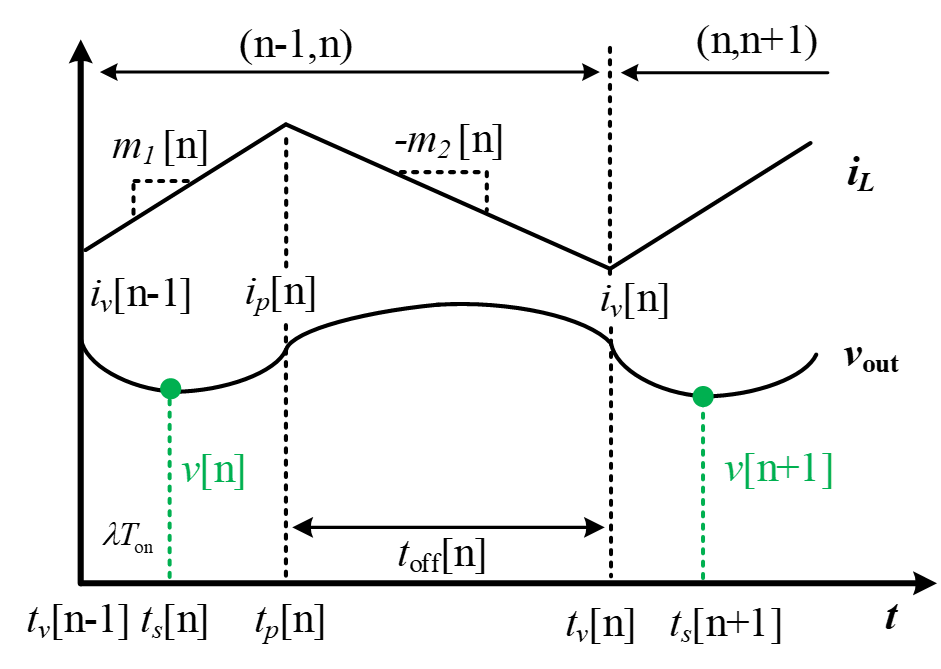}
    \caption{Inductor current and capacitor voltage waveforms of a constant\nobreakdash-on\nobreakdash-time current\nobreakdash-mode buck converter.}
    \label{ebuckderivation}
\end{figure}

The valley current, enforced by a controller at the instance $t_v[n]$, is denoted at the $n^{\text{th}}$ switching cycle as $i_v[n]$ with the peak occurring at $t_p[n]$. 
The valley\nobreakdash-current controller can be realized by a comparator and a digital\nobreakdash-to\nobreakdash-analog converter (DAC). The inductor current through $S_2$, converted to a voltage with a current\nobreakdash-sense resistor, is compared to the reference value from the DAC, which is updated every cycle.  If the valley current falls below the command, $S_2$ turns off and $S_1$ turns on. Because the valley\nobreakdash-current settles in one cycle for a constant on\nobreakdash-time controller, i.e. deadbeat, the command and actuation of current can be considered simultaneous.

The output voltage is sampled at every time instant $t_s[n]$ and is denoted for the $n^{\text{th}}$ cycle as $v[n]$. The ADC acquires and converts the output voltage during the on\nobreakdash-time of $S_1$. The sampling time instant of the ADC is parameterized by $\lambda$, specifically $ t_s[n] = t_v[n-1] +\lambda T_{\text{on}} $, $\lambda \in (0,1)$. 

The ordering of the indices for the sequences $\{i_v[n]\}$ and $\{v[n]\}$ in Fig.\,\ref{ebuckderivation} is determined by circuit topology and causality.  The reasoning is as follows. The output voltage that is sampled in the time domain immediately after $i_v[n-1]$ cannot be indexed as $v[n-1]$ because of the implication of a non-existent direct feed-forward path from current to voltage in the circuit. This same output voltage also cannot be indexed as $v[n+1]$ because causality dictates that $v[n+1]$ cannot be updated ahead of $i_v[n]$.  Only the unique ordering for the indices in Fig.\,\ref{ebuckderivation} remains.




\subsection{Usual Power Electronics Assertions for COT\nobreakdash-CM\\ Buck Converters} \label{subsec: common_asumptions}
The COT\nobreakdash-CM buck converters satisfy the following properties:
\renewcommand\labelenumi{(\theenumi)}
\begin{enumerate*} []\item the on\nobreakdash-time $T_{\text{off}}$ of $S_1$ is constant;\item off\nobreakdash-time $t_{\text{on}}[n]$ of $S_1$ is determined by valley\nobreakdash-current (the minimum inductor current every switching cycle), $0 < t_{\text{on}}[n]< +\infty$; \item output voltage $v_{\text{out}}$ has small ripple so the inductor waveform can be well\nobreakdash-approximated as a linear ramp; \item the inductor current is piecewise linear with the slopes of the rising and falling current ramps denoted as $m_1$ and $-m_2$ respectively; \item input voltage $V_{\text{in}} > 0$ is constant; \item the circuit is in continuous\nobreakdash-conduction\nobreakdash-mode (CCM) \cite{erickson2007}; \item output capacitor voltage denoted by $v(t) > 0$.
\end{enumerate*}
\subsection{Modeling of the Buck Converters in 5S} \label{sec: 5s_model_buck}

We begin by analyzing the more general problem of a linear\nobreakdash-ramp current source with intercept $i_0$ and slope $m$ charging a parallel resistor $R$ and capacitor $C$ whose initial voltage is $v_0$. The solution of capacitor voltage $v(t)$ is
\begin{equation}
v(t) = \frac{1}{C}\int_0^t (i_0 + m \tau )e^{-\frac{t-\tau}{RC}}d\tau + v_0e^{-\frac{t}{RC}}. 
\end{equation}
We can approximate this voltage, which can be representative of the output of a power converter, by a quadratic using a Taylor series approximation with an error that is small for $T_{\text{on}} \ll RC$,
\begin{equation}
v(t)  =   v_0 +   t\left(\frac{i_0}{C} -\frac{v_0}{RC}\right) + t^2\frac{m}{2C}.
\end{equation}
From this perspective, a practical current\nobreakdash-mode buck converter with constant on\nobreakdash-time  can be represented as a time\nobreakdash-piecewise\nobreakdash-linear current source charging the output of an $RC$ filter, with the rising slope $m_1$ and falling slope $m_2$ can be expressed as
\begin{eqnarray}
m_1 = \frac{V_{\text{in}}-v_{\text{out}}[n]}{L}; 
m_2 = \frac{v_{\text{out}}[n]}{L}.
\end{eqnarray}
The inductor current ramp can then be written as
\begin{align}
i_p[n] &= i_v[n-1] + m_1T_{\text{on}},\label{iv2ip_buck}\\  
i_v[n] &= i_p[n] - m_2t_{\text{off}}[n],\label{ip2iv_buck}
\end{align}
where $t_{\text{off}}[n]$ is implicitly controlled by the valley-current command.

In the time interval $(t_s[n],t_p[n])$, we treat the inductor current $i_v[n-1]+m_1t$ as a ramp that charges the output $RC$ filter. Given capacitor voltage $v[n]$  at time instance $t_s[n]$, we can express the capacitor voltage $v(t_p[n])$ by
\begin{eqnarray}
v({t_p}[n]) &=& \left(1-\frac{(1-\lambda ) T_{\text{on}} }{RC}\right)v[n]+ \frac{(1-\lambda )^2 m_1 T_{\text{on}}^2}{2 C} \nonumber \\
&\ &     + \frac{(1-\lambda ) T_{\text{on}} i_v[n-1]}{C}.
\end{eqnarray}

In the time interval $(t_p[n],t_v[n])$, given capacitor voltage $v({t_p}[n])$, we can express the capacitor voltage $v({t_v}[n])$ by
\begin{equation}
v({t_v}[n]) = \left(1-\frac{ t_{\text{off}}[n]}{RC}\right)v({t_p}[n])- \frac{ m_2 t^2_{\text{off}}[n]}{2 C}+ \frac{t_{\text{off}}[n] i_p[n]}{C}.
\end{equation}

In the time interval $(t_p[n],t_s[n+1])$, given capacitor voltage $v({t_v}[n])$, we can express the capacitor voltage  $v[n+1]$  by
\begin{equation}
v[n+1] = \left(1-\frac{\lambda T_{\text{on}}}{RC}\right)v({t_v}[n])+ \frac{\lambda ^2 m_1 T_{\text{on}}^2}{2 C}+\frac{\lambda T_{\text{on}} i_v[n]}{C}.
\end{equation}

We have the equation between $v[n+1]$, $v[n]$, $i_v[n-1]$ and $i_v[n]$. It reveals a non\nobreakdash-linear relationship between the valley current sequence and the voltage sequence. We perform a small perturbation to this curve to obtain the linearization about the operating point
\begin{equation} \label{DE_off}
\tilde{v}[n+1] = \gamma_v \tilde{v}[n] +  \gamma_i \tilde{i}_v[n] + \gamma_{im1} \tilde{i}_v[n-1].
\end{equation}
By denoting $\tilde{\tau}_1 = RC/T_{\text{on}}$, $\tilde{\tau}_2 = L/R T_{\text{on}}$, and $M_r = \left(V_{\text{in}} - V_{\text{out}} \right)/V_{\text{out}}$, the coefficients $\gamma_v$, $\gamma_i$, and $\gamma_{im1}$ could be parameterized as follows:
\begin{eqnarray} 
\gamma_v &=& 1-(1+M_r)\,\hat{\tau}^{-1}_1-\frac{1+M_r}{2}\hat{\tau}^{-1}_1\hat{\tau}^{-1}_2; \nonumber \\
\gamma_i &=& R \left(\lambda + \frac{M_r}{2}\right) \hat{\tau}^{-1}_1; \nonumber\\
\gamma_{im1} &=& R \left(1 - \lambda + \frac{M_r}{2}\right) \hat{\tau}^{-1}_1 \nonumber.
\end{eqnarray}

In {\em 5S}, the power converter can be modeled as a low\nobreakdash-order difference equation. Although {\em 5S} is obtained from a non\nobreakdash-periodic sampling process, the $z$\nobreakdash-transform can always be performed on any linearized difference equation within a region of convergence irrespective of the underlying sampling.
We perform the $z$\nobreakdash-transform on linear difference equation (\ref{DE_off}); the $z$\nobreakdash-domain expression of the plant is 
\begin{equation}
\frac{\hat v(z)}{\hat i_v(z)} = g_1\,\frac{(1-b_1z^{-1})\,z^{-1}}{1-a_1z^{-1}},
\end{equation}
where
\begin{align}
a_1 &= 1-(1+M_r)\,\hat{\tau}^{-1}_1-\frac{1+M_r}{2}\hat{\tau}^{-1}_1\hat{\tau}^{-1}_2, \nonumber \\  g_1 &= R \left(\lambda + \frac{M_r}{2}\right) \hat{\tau}^{-1}_1,\quad
b_1 = -\frac{1-\lambda+ M_r/2}{\lambda+M_r/2},\nonumber \\
M_r & = \frac{V_{\text{in}} - V_{\text{out}}}{V_{\text{out}}}, \quad \hat{\tau}_1 = \frac{RC}{T_{\text{on}}}, \quad \hat{\tau}_2 = \frac{L/R}{T_{\text{on}}}.
\end{align}

The $z$\nobreakdash-domain transfer function gives insights to controller design because it reveals the direct relationship between circuit parameters and {\em 5S} dynamics. The slow pole in the transfer function indicates the amount of time scale separation between the switch on\nobreakdash-time and the time constant of the output low\nobreakdash-pass filter. Together the fast pole and zero describe the delay between measurement and actuation; the zero  represents the delay being less than one sample period.  This zero varies with $\lambda T_{\text{on}}$, which is the delay time from the valley\nobreakdash-current time to the voltage\nobreakdash-sampling time. The pole $z=0$ corresponds to one switching cycle delay, which reflects the causality of the original physical system. In the circuit, there is no direct feed\nobreakdash-forward from the inductor current to the capacitor voltage, which means that the sampled voltage for the current cycle is determined by the current from the previous cycle.

\subsection{Operation of Constant Off\nobreakdash-Time Current\nobreakdash-Mode \\Boost Converters}
The current\nobreakdash-mode boost converter with constant off\nobreakdash-time is illustrated in Fig.\,\ref{fig:lidarpowersupply}. The term ``constant off\nobreakdash-time'' indicates that the turn\nobreakdash-off time $T_\text{off}$ of switch $S_1$ is predetermined and kept constant. The inductor current settles in one cycle after the peak current command is updated. The turn\nobreakdash-on time of $S_1$ is implicitly determined by the peak\nobreakdash-current command $i_p[n]$ at $t_p[n]$ ($n > 0$). The sampling of $v_{\text{out}}$ and control algorithms are conducted cycle\nobreakdash-by\nobreakdash-cycle during the off\nobreakdash-time because the off\nobreakdash-time is constant even during transients. $v[n]$ and $i_p[n]$ are not at the same physical time although their indices are same. 
The peak\nobreakdash-current controller uses an analog comparator to determine when the inductor current crosses a threshold that is determined by the DAC reference, which is updated every cycle. Fig.\,\ref{fig:inductorboost} shows the current and the voltage of inductor $L$ and capacitor $C$, respectively.
 \begin{figure}
    \centering
    \includegraphics[width=8cm]{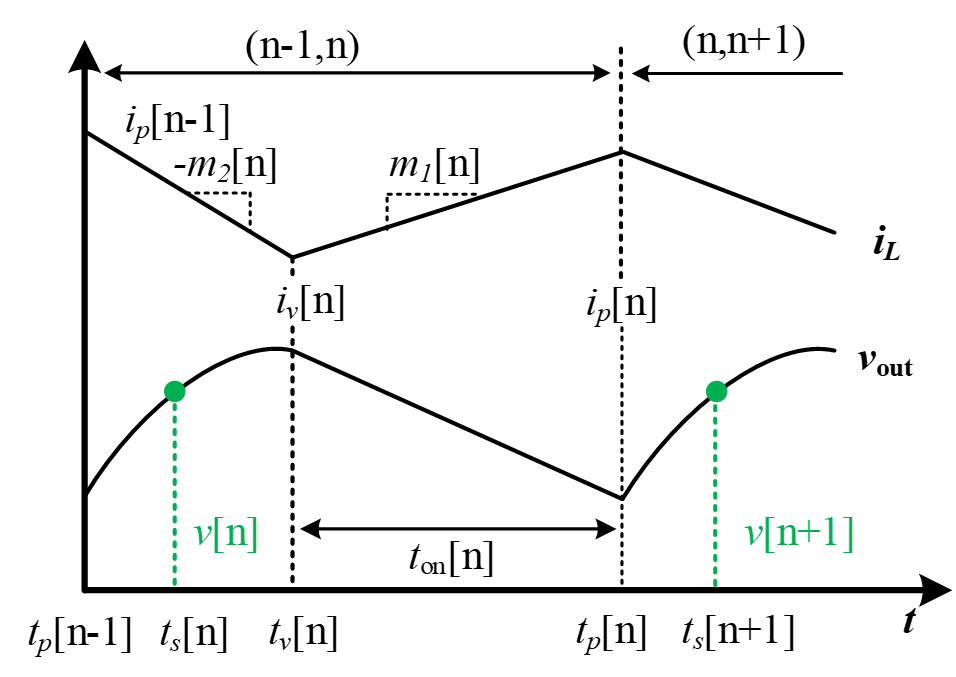}
    \caption{\label{fig:inductorboost} Inductor current and capacitor voltage waveform for COT\nobreakdash-CM boost converter.}
\end{figure}

The acquisition and the conversion of the output voltage occurs when $S_1$ is turned\nobreakdash-off. We use a fixed parameter $\lambda$ to quantify the sampling time $t_s[n]$, namely \mbox{$ t_s[n] = t_p[n-1] +\lambda T_{\text{off}} $},
\mbox{$\lambda \in (0,1)$}.
The peak inductor current during each switching cycle is represented by $i_p[n]$, while the $v[n]$ sequence and the $i_p[n]$ sequence do not correspond to the same physical times. In this discretization, all the measurements and the switching are event\nobreakdash-triggered instead of clock\nobreakdash-triggered. The peak\nobreakdash-current detection signal triggers the turn\nobreakdash-off event of $S_1$, and the off\nobreakdash-time timer drives the turn\nobreakdash-on event.

We define $\Sigma$, a class of power converters that contains all the properties of the current-mode boost converter with constant off-time as discussed above.

\subsection{Usual Power Electronics Assertions for Boost Converters}
\begin{definition} \label{circuitassumption} 
$\Sigma$ is a class of boost converters with the following \textbf{properties}: \renewcommand\labelenumi{(\theenumi)}\begin{enumerate*} []\item the off\nobreakdash-time $T_{\text{off}}$ of $S_1$ is constant;\item on\nobreakdash-time $t_{\text{on}}[n]$ of $S_1$ is determined by the peak\nobreakdash-current (the maximum commanded inductor current every switching cycle), $0 < t_{\text{on}}[n]< +\infty$; \item output voltage $v_{\text{out}}$ has small ripple so the inductor waveform can be well\nobreakdash-approximated as a linear ramp; \item input voltage $V_{\text{in}}>0$ is constant; \item the circuit is in continuous\nobreakdash-conduction\nobreakdash-mode (CCM); \item output capacitor voltage denoted by $v(t) > 0$.
\end{enumerate*}
\end{definition}

\subsection{Modeling of Boost Converters in 5S}
A stabilizing voltage controller \cite{ozbay1999introduction} uses the error between the output voltage sample and reference to adjust the command for the valley\nobreakdash-current.  The voltage controller can be realized digitally with a field\nobreakdash-programmable gate array (FPGA).  In this {\em 5S} framework, all sampling and control actions are driven by events rather than time.  The turn\nobreakdash-on event for $S_1$ is determined by the valley\nobreakdash-current detection event and the turn\nobreakdash-off event by the timeout event for the on\nobreakdash-time timer.  Because of this, the events are not in general periodically nor uniformly distributed in the physical time domain.
\begin{definition} \label{defk2}
$\mathcal{S}$ is the set of systems with class $\Sigma$ plants, and controller $K$ with noise-free output voltage measurement and peak-current actuation.
\end{definition}

\begin{definition}  \label{defk3}
Given any system $\mathcal{S}_1 \in \mathcal{S}$, under the switching-synchronized discrete state-space representation, $V$ is a class of \emph{discrete state trajectories}, which includes a sequence of vectors $\left\{\mathbf{u}[n]\right\}$ with $\mathbf{u}[n-1] = \begin{bmatrix}
v[n]&  i_p[n-1]
\end{bmatrix}^{\text{T}}$. $v[n]$ is the output voltage that is sampled at $t_s[n] = t_p[n-1]+ \lambda T_{\text{off}}$, where $t_p[n-1]$ is the $n^{\text{th}}$ turn-off time of $S_1$, as in Fig.~\ref{fig:inductorboost}. 
$i_p[n]$ is the peak current after $n$ switching cycle, as in Fig.~\ref{fig:lidarpowersupply} and Fig.~\ref{fig:inductorboost}.
\end{definition}

The \emph{discrete state trajectory} is defined in \cite{Cui2018a}. We want to derive the difference equation(s) that governs the class $V$ discrete state trajectories.

By performing a similar procedure to \ref{sec: 5s_model_buck}, the $z$\nobreakdash-domain model of the boost converter is
\begin{align} {\label{eqn:modifiedplantmodel}}
\frac{\tilde v(z)}{\tilde i_p(z)} = g_1\,\frac{(1 - b_1 z^{-1})\,z^{-1}}{1 - a_1 z^{-1}}\,,
\end{align}
which is parameterized by
\begin{align}
    a_1 =\,&
    1-2(\hat{\tau}_1^{-1}+\hat{\tau}_3^{-1})-\frac{\lambda^2+(1-\lambda)^2}{2}\hat{\tau}_1^{-1} \hat{\tau}_2^{-1},  \\
    d_1 =\,& ( \lambda \hat{\tau}_1^{-1} + \frac{\lambda^2}{2}\hat{\tau}_1^{-1} \hat{\tau}_2^{-1} - 1)(\hat{\tau}_1^{-1} + \hat{\tau}_3^{-1}) \nonumber \\
    & -\left(1+(1-\lambda)\hat{\tau}_1^{-1} -2(\hat{\tau}_1^{-1}+\hat{\tau}_3^{-1}) - \frac{\lambda^2}{2}\hat{\tau}_1^{-1}\hat{\tau}_2^{-1} \right) \nonumber \\
    &\times (1-\lambda)\hat{\tau}_1^{-1} \hat{\tau}_2^{-1},  \\
    d_2 =\,& \left(\lambda \hat{\tau}_1^{-1} +\frac{\lambda^2}{2}\hat{\tau}_1^{-1} \hat{\tau}_2^{-1}-1\right)(\hat{\tau}_1^{-1} + \hat{\tau}_3^{-1}) + \lambda \hat{\tau}_1^{-1} \hat{\tau}_2^{-1}, \\
    g_1 =\,& \left(\lambda \hat{\tau}_1^{-1} - \left(1 - \lambda \hat{\tau}_1^{-1} - \frac{\lambda^2}{2} \hat{\tau}_1^{-1} \hat{\tau}_2^{-1}\right)\frac{\hat{\tau}_1^{-1}+\hat{\tau}_3^{-1}}{\hat{\tau}_2^{-1}}\right)R, \\
    b_1 =\,& \frac{d_1}{d_2}.
\end{align}

The definitions of $\hat{\tau}_1$, $\hat{\tau}_2$, and $\hat{\tau}_3$ are
\begin{align}
    \hat{\tau}_1 =\, \frac{RC}{T_{\text{off}}}, \quad \hat{\tau}_2 = \frac{L/R}{T_{\text{off}}}, \quad \hat{\tau}_3 = \frac{RC}{T_{\text{on}}}, 
\end{align}
where $T_{\text{off}}$ is the constant off-time, $T_{\text{on}}$ is the on\nobreakdash-time, and $T=T_{\text{off}}+T_{\text{on}}$. The derivation details can be found in \cite{Cui2019c}. 

 Note that the transfer function has one zero and two poles while the traditional $s$\nobreakdash-domain current\nobreakdash-mode boost converter model in \cite{Stankovic2001} shows only one zero and one pole. All poles are inside the unit disk, so the open loop system in {\em 5S} is stable. The fast pole is from the single\nobreakdash-cycle delay between measurement and actuation, which is the causality requirement of the discretized system. The slow pole is from the output $RC$\nobreakdash-filter of the boost converter, which corresponds with the traditional $s$\nobreakdash-domain model.
 The zero in the {\em 5S} model is analogous to the right\nobreakdash-half\nobreakdash-plane (RHP) zero of the averaged\nobreakdash-state\nobreakdash-space boost converter model. One difference between the {\em 5S} model and the averaged state\nobreakdash-space model is that this zero is not only determined by the circuit parameters and operating point, but also influenced by the sampling delay.
 
 The theoretical voltage step response matches the simulation, which is shown in Fig. \ref{fig:compinductor} and Fig. \ref{fig:compcapacitor} in Section \ref{section:sim and exp}. 
 We define the single\nobreakdash-step voltage error $e[n]$ between the model and simulation as
\begin{align} \label{eqn:singlesteperrordef}
    e[n] \triangleq \frac{ \left(V_{\text{sim}}[n] - V_{\text{model}}[n] \right)}{V_{\text{stepsize}}}\times 100\%.
\end{align}
Fig.\,\ref{fig:errorwpertubarion} compares the worst\nobreakdash-case voltage error between the model and simulation under different output\nobreakdash-voltage step sizes at the initial output voltage 40\,V. We define the worst\nobreakdash-case error $e_w$ between the model and simulation as
\begin{align} \label{eqn:errordef}
    e_w \triangleq \underset{n}{\mathrm{max}}\, \big|e[n]\big|.
\end{align}
\begin{figure}
    \centering
    \includegraphics[width=7cm]{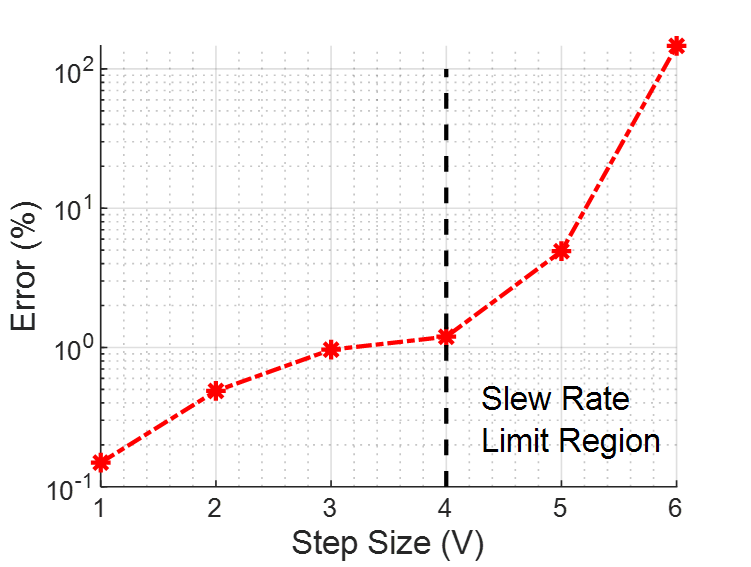}
    \caption{\label{fig:errorwpertubarion} Worst\nobreakdash-case voltage error between the model and simulation under different voltage\nobreakdash-step sizes at the initial output voltage 40\,V.}
\end{figure}

We observe that the worst\nobreakdash-case error is small (below 1\,\%) if the voltage\nobreakdash-step size is within 4\,V. The error rises quickly when the step size is above 4\,V because of large\nobreakdash-signal effects; for example, the inductor\nobreakdash-current slew rate limit starts to dominate the dynamics. For this COT\nobreakdash-CM boost converter, the first significant large\nobreakdash-signal behavior which designers face is that the minimum on\nobreakdash-time is reached; hence, the actuator is saturated and the peak inductor current no longer matches the current command.
A gain\nobreakdash-scheduled controller is needed to perform the large\nobreakdash-signal voltage reference steps, which will be discussed in details in Section\,\ref{section:sim and exp}.
\section{Switching-Synchronized Sampled-State Space Control Concepts for DC-DC Converters} \label{controllerconcept_5s}
In this section, we show that the \emph {5S} model of dc-dc converters agrees with the switched state-space model \cite{Stankovic2001} concerning the stability and performance by a COT-CM boost converter example. However, the controller synthesis problem in {\em 5S} model is much easier than that for a switched state-space model because the former is a pure discrete system while the latter is a hybrid system \cite{Albea2015}.
\subsection{Stability}
Stability is the minimum requirement for a closed\nobreakdash-loop controlled system.
Trajectories in \emph{5S} contain less information than the physical trajectories.
Because of the discretization, the information about the system dynamics between the adjacent sampled data is lost.
Because of the non\nobreakdash-uniform sampled data, the information about physical time is lost as well.
However, by utilizing the restrictions on the trajectories in Section \ref{subsec: common_asumptions}, the stability relationship is preserved.
We mathematically proved that stability in \emph{5S} enforces stability in physical time.

Stability means the capacitor voltage and inductor current waveforms converge to a periodic trajectory in the phase plane. This trajectory must contain no subharmonics of the switching frequency.

\begin{definition} \label {classWtraj}
Given any system $\mathcal{S}_1 \in \mathcal{S}$, $W$ is a class of \emph{continuous-time state trajectories} that  a vector-valued function evolving through time $\mathbf{x}(t)= \begin{bmatrix}
v_{\text{out}}(t) & i_L(t) \end{bmatrix}^{\text{T}}$. $v_{\text{out}}(t)$ is the output voltage and $i_L(t)$ is the inductor current as in Fig. \ref{fig:lidarpowersupply}. Without loss of generality and because of causality, we denote the starting time of $\mathbf{x}(t)$ as the time $t_0 \triangleq t_p[n=0]$ of the first inductor current peak.
\end{definition}

Given a system $\mathcal{S}_1 \in \mathcal{S}$ with all the parameters and initial states determined, class $V$ and class $W$ trajectories are denoted as $\{\mathbf{u}[n]\}$ and $\mathbf{x}(t)$ respectively, there must exist more than one mapping that transforms $\{\mathbf{u}[n]\}$ to $\mathbf{x}(t)$. To examine the relationship between $\{\mathbf{u}[n]\}$ and $\mathbf{x}(t)$, we only need to construct one of these mappings. 
A boost converter operating in CCM can be modeled as a switched-affine time-invariant system. In what follows, $ s \in \{0,1\}$ as a super/subscript indicates the state of the switch: $s = 0$ represents $S_1$ is off and $s = 1$ represents $S_1$ is on, as in Fig.~\ref{fig:lidarpowersupply}.

The state transition matrix $\mathbf{\Phi}_s$ and the offset vector $\mathbf{\Gamma}_s$ are defined as follows. During the time interval $\left[t_p[n],t_p[n]+T_{\text{off}}\right]$ ($n \ge 0$), 
\begin{align}   \label{swoffstatetransfer}
\mathbf{x}(t) &= \begin{bmatrix}
x_1(t_p[n]) + \alpha_1(t-t_p[n]) - \frac{m_2[n]}{2C} (t-t_p[n])^2 \nonumber \\
x_2(t_p[n]) - m_2[n] (t-t_p[n])
\end{bmatrix} \\ 
&= \mathbf{\Phi}_0(t,t_p[n])\;\mathbf{x}({t_p}[n])+\mathbf{\Gamma}_0(t,t_p[n]).
\end{align}
where $\alpha_1 = x_2(t_p[n])/C-x_1(t_p[n])/RC$,
\begin{align*}
\scalebox{0.7}{$
\mathbf{\Phi}_0(t_2,t_1) \triangleq \begin{bmatrix}
1 - \frac{t_2-t_1}{RC} 
- \frac{(t_2-t_1)^2}{2LC}
& \frac{t_2-t_1}{C} \\
-\frac{(t_2 - t_1)}{2L} & 1
\end{bmatrix}, \mathbf{\Gamma}_0(t_2,t_1) \triangleq  \begin{bmatrix}
\frac{V_{\text{in}}}{2LC}(t_2-t_1)^2\\
\frac{V_{\text{in}}}{2L}(t_2-t_1)
\end{bmatrix}$}.
\end{align*}


During the time interval $\left[t_p[n]+T_{\text{off}},t_p[n+1]\right]$,
\begingroup
\small
\begin{align}   \label{swonstatetransfer}
\mathbf{x}(t) &= \begin{bmatrix}
x_1(t_p[n]+T_{\text{off}}) (1 -  (t-t_p[n]-T_{\text{off}})/RC) \nonumber \\
x_2(t_p[n]+T_{\text{off}}) + m_1 (t-t_p[n]-T_{\text{off}})
\end{bmatrix} \\ 
&=\mathbf{\Phi}_1(t,t_p[n]+T_{\text{off}})\;\mathbf{x}(t_p[n]+T_{\text{off}}) + \mathbf{\Gamma}_1(t,t_p[n]+T_{\text{off}}),
\end{align}
\endgroup
where
\begin{align*}
\scalebox{0.95}{$
\mathbf{\Phi}_1(t_2,t_1) \triangleq \begin{bmatrix}
1-\frac{t_2-t_1}{RC} & 0\\
0 & 1
\end{bmatrix}, \mathbf{\Gamma}_1(t_2,t_1) \triangleq \begin{bmatrix}
0\\
\frac{V_{\text{in}}}{L}(t_2-t_1)
\end{bmatrix}$}.
\end{align*}
We observe that $\mathbf{\Phi}_s(t_2,t_1) = \mathbf{\Phi}_s(t_2-t_1,0)$ and $\mathbf{\Gamma}_s(t_2,t_1) = \mathbf{\Gamma}_s(t_2-t_1,0)$. We can write $\mathbf{\Phi}_s$ and $\mathbf{\Gamma}_s$ by \begin{equation}
\mathbf{\Phi}_s = \begin{bmatrix}
\phi_{vv}^s & \phi_{vi}^s \\
\phi_{iv}^s & \phi_{ii}^s
\end{bmatrix} ,\quad \mathbf{\Gamma}_s = \begin{bmatrix}
\gamma_{gv}^s\\
\gamma_{gi}^s 
\end{bmatrix}.    \nonumber 
\end{equation}

From (\ref{swoffstatetransfer}), the state at the sampling time $t_s$ in the $(n+1)^{\text{th}}$ switching cycle is
\begin{equation} \label{v2s}
\mathbf{x}({t_s}[n+1]) = \mathbf{\Phi}_0(\lambda T_{\text{off}},0)\;\mathbf{x}({t_p}[n]) + \mathbf{\Gamma}_0(\lambda T_{\text{off}},0).
\end{equation}
In defining $v(t_s[n])$ as the output voltage at the sampling time instant in the $n^{\text{th}}$ interval and $v[n]$ as the sampled output voltage at $t_s[n]$, $v(t_s[n+1]) = v[n+1]$. Similarly, $i(t_{p}[n]) = i_p[n]$ from Definition~\ref{defk3}.

From (\ref{v2s}), the state at the $(n+1)^{\text{th}}$ turn-off time instant is
\begin{align} \label{reconstruct:vts}
\mathbf{x}({t_p}[n]) = \mathbf{B} \;\mathbf{u}[n] + \mathbf{C},
\end{align}
where
\begingroup
\small
\begin{equation*}
\mathbf{B}  \triangleq
\begin{bmatrix}
\frac{1}{\phi_{vv}^0(\lambda T_{\text{off}},0)}& - \frac{\phi_{vi}^0(\lambda T_{\text{off}},0)}{\phi_{vv}^0(\lambda T_{\text{off}},0)} \\
0 & 1
\end{bmatrix}, \quad \mathbf{C} \triangleq \begin{bmatrix}
-\frac{\gamma_{gv}^0(\lambda T_{\text{off}},0)}{\phi_{vv}^0(\lambda T_{\text{off}},0)} \\ 0
\end{bmatrix}.
\end{equation*}
\endgroup
Because $\phi_{vv}(t,0) \neq 0$ for all $t>0$ from property 3 of class $\Sigma$ power converters,  $\mathbf{B}$ is an invertible matrix.\\
Any continuous-time state during the off-time of the $(n+1)^{\text{th}}$ switching cycle is determined by (\ref{swoffstatetransfer}) and (\ref{reconstruct:vts}). 

The state at the $(n+1)^{\text{th}}$ turn-on time instant of $S_1$ can be reconstructed by
\begin{equation}\label{xintp}
\mathbf{x}({t_p}[n]+T_{\text{off}}) = \mathbf{\Phi}_0(T_{\text{off}},0)\;\mathbf{x}({t_p}[n]) + \mathbf{\Gamma}_0(T_{\text{off}},0).
\end{equation} 
Based on property 5 and property 6 of class $\Sigma$ power converters, $t_{\text{on}}[n+1]$ can be expressed as
\begin{align} \label{offtimerecon}
t_{\text{on}}[n+1]= & \; ( i_p[n+1]-i_p[n]+m_2T_{\text{off}})/m_1 \nonumber \\
= & \; \mathbf{E}\mathbf{u}[n+1] + \mathbf{F}\mathbf{u}[n] + T_{\text{off}}, \end{align}
where 
\begin{align}
\mathbf{E} = \begin{bmatrix}
0 & L/V_{\text{in}}
\end{bmatrix}, \quad
\mathbf{F} = \begin{bmatrix}
T_{\text{off}}/V_{\text{in}} & -L/V_{\text{in}}.
\end{bmatrix}
\end{align}
Any continuous-time state during the on-time of the $(n+1)^{\text{th}}$ switching cycle can be determined from (\ref{swonstatetransfer}), (\ref{xintp}) and (\ref{offtimerecon}).
A mapping $\mathbf{A}$ from $\left\{\mathbf{u}[n]\right\}$ to $\mathbf{x}(t)$ can be explicitly constructed as
\begingroup
\small
\begin{align}  \label{mappingA}
 \mathbf{A}: \quad (a)&\  t_{\text{on}}[n+1]=  \; \mathbf{E}\mathbf{u}[n+1] + \mathbf{F}\mathbf{u}[n] + T_{\text{off}}. \nonumber \\
(b)&\ \mathbf{x}(\tau+\sum_{j=1}^{n} (T_{\text{off}}+t_{\text{on}}[j]) + t_0) \nonumber \\
 &= \left\{
\begin{array}{l}
   \mathbf{\Phi}_0(\tau,0)\;\mathbf{B} \;\mathbf{u}[n]  + \mathbf{\Phi}_0(\tau,0)\;\mathbf{C} + \mathbf{\Gamma}_0 (\tau,0) ,\\
  \quad  \quad  \quad  \quad \quad  \quad  \quad \quad \quad \quad \quad 
   \quad  \quad   
  0 < \tau \le T_{\text{off}}, \\
  \\
     \mathbf{\Phi}_1(\tau,T_{\text{off}}) \mathbf{\Phi}_0(T_{\text{off}},0)\;(\mathbf{B} \;\mathbf{u}[n] + \mathbf{C}) \\ 
      + \mathbf{\Phi}_1(\tau,T_{\text{off}}) \mathbf{\Gamma}_0 (T_{\text{off}},0) + \mathbf{\Gamma}_1 (\tau, T_{\text{off}}), \\
    \quad  \quad  \quad  \quad \quad \quad \quad \;
    T_{\text{off}} < \tau \le T_{\text{off}} + t_{\text{on}}[n+1].
\end{array} 
\right.
\end{align}
\endgroup
To simplify the (\ref{mappingA})-(b), we define functions $\mathbf{D}_{n+1}$ from $\mathbf{u}[n] \times \tau$ to $x(t)$ as
\begin{align}  \label{mappingD}
\mathbf{D}_{n+1} (\mathbf{u}[n],\tau) = \mathbf{x}(\tau + \sum_{j=1}^{n} (T_{\text{off}}&+t_{\text{on}}[j])+t_0),\nonumber \\
&0 <\tau \le t_{\text{on}} [n+1]. 
\end{align}

We can now show that mapping $\mathbf{A}$ preserves  stability. 
This result is consequential because in {\em 5S}, classical discrete-domain design methods can be applied to controller design.
\begin{definition}
From \cite{Boyd2004}, the distance from a point $\mathbf{x}
$ to a nonempty set $\boldsymbol{\gamma}$ is 
\begin{equation*}
  \textbf{dist}(\mathbf{x},\boldsymbol{\gamma}) \triangleq \text{inf}\{\;\norm{\mathbf{x} - \mathbf{y}}_2\;|\; \mathbf{x}\in \mathbb{R}^n, \mathbf{y} \in \boldsymbol{\gamma} \subseteq \mathbb{R}^n\}.
\end{equation*}
\end{definition}
\begin{definition} \label{defeq} 
For a system $\mathcal{S}_1 \in \mathcal{S}$, a continuous-time state trajectory $\mathbf{x}_e(t)$ is a \emph{least harmonics equilibrium} if there exists a period $T>0$ such that $\mathbf{x}_e(t+T) = \mathbf{x}_e(t)$ for all $t \ge 0$ and there is only one current peak during the period $T$.\footnote{In this paper, unless specially mentioned, the equilibrium means least harmonics equilibrium.}
\end{definition}
\begin{definition} \label{defstability} 
$\mathbf{x}_e(t)$ is an equilibrium of system $\mathcal{S}_1 \in \mathcal{S}$. Assume $\mathbf{x}_e(t)$ is perturbed at $t_b$ and the perturbed trajectory is $\mathbf{x}(t)$. $t_1 \ge t_b$ and $t_2 \ge t_b$ occur at the current peaks of $\mathbf{x}(t)$ and $\mathbf{x}_e(t)$ respectively. $\mathbf{x}_e(t)$ is a \emph{synchronously asymptotically stable least harmonics equilibrium of $S$} if there exists a $\delta>0$ such that if 
$\norm[2]{\mathbf{x}(t_1)-\mathbf{x}_e(t_2)}_2<\delta$, 
then $\displaystyle\lim_{t\rightarrow +\infty}  \textbf{dist}(\mathbf{x},\mathbf{x}_e(t)) = 0$.
\end{definition}

\begin{theorem}
\label{stabilityrelationship}
For a system $\mathcal{S}_1 \in \mathcal{S}$, if a discrete state in a switching-synchronized sampled-state space representation $\mathbf{u}_e$ is an asymptotically stable equilibrium in the sense of Lyapunov (ISL) and if we construct a sequence $\{\mathbf{u}_e[n]\}$ with $\mathbf{u}_e[n]=\mathbf{u}_e = [V_e, I_e]$, $\forall n\ge 0$, then the corresponding  continuous-time state trajectory $ \mathbf{x}_e(t) \triangleq \mathbf{A}(\{\mathbf{u}_e[n]\})$ is a stable equilibrium in the sense of Definition \ref{defstability}.
\end{theorem}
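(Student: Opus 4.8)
The plan is to transport the discrete asymptotic stability of $\mathbf{u}_e$ through the explicit reconstruction map $\mathbf{A}$ of (\ref{mappingA}) and its per-cycle restriction $\mathbf{D}_{n+1}$ of (\ref{mappingD}), and thereby verify Definition~\ref{defstability}. First I would check that $\mathbf{x}_e(t)\triangleq\mathbf{A}(\{\mathbf{u}_e[n]\})$ is a least-harmonics equilibrium. Because the sequence is constant, (\ref{offtimerecon}) returns a single on-time $t_{\text{on}}^e=\mathbf{E}\mathbf{u}_e+\mathbf{F}\mathbf{u}_e+T_{\text{off}}$ for every cycle, so each cycle has identical duration $T=T_{\text{off}}+t_{\text{on}}^e$ and, by (\ref{mappingA}), an identical state evolution in $\tau$. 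Hence $\mathbf{x}_e(t+T)=\mathbf{x}_e(t)$ with exactly one current peak per period, as required by Definition~\ref{defeq}.

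Next I would translate a continuous-time perturbation measured at a current peak into a perturbation of the discrete state. Relabeling the first current peak of the perturbed trajectory after $t_b$ as $t_1=t_p[0]$, equation (\ref{reconstruct:vts}) yields $\mathbf{x}(t_1)=\mathbf{B}\,\mathbf{u}[0]+\mathbf{C}$, while the equilibrium satisfies $\mathbf{x}_e(t_2)=\mathbf{B}\,\mathbf{u}_e+\mathbf{C}$ at each of its peaks. Subtracting and using that $\mathbf{B}$ is invertible (established from $\phi_{vv}^0\neq0$) gives $\norm{\mathbf{u}[0]-\mathbf{u}_e}_2\le\norm{\mathbf{B}^{-1}}_2\,\norm{\mathbf{x}(t_1)-\mathbf{x}_e(t_2)}_2$. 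Thus, given the discrete stability radius $\delta'$, choosing $\delta=\delta'/\norm{\mathbf{B}^{-1}}_2$ in Definition~\ref{defstability} forces $\mathbf{u}[0]$ into the basin of attraction of $\mathbf{u}_e$.

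I would then invoke the hypothesis: since $\mathbf{u}_e$ is asymptotically stable ISL, the Lyapunov part keeps $\{\mathbf{u}[n]\}$ inside a fixed compact neighborhood of $\mathbf{u}_e$ while the attractivity part gives $\mathbf{u}[n]\to\mathbf{u}_e$. The reconstruction $\mathbf{D}_{n+1}(\mathbf{u}[n],\tau)$ is assembled only from $\mathbf{\Phi}_s,\mathbf{\Gamma}_s,\mathbf{B},\mathbf{C}$, so it is jointly continuous (indeed affine in $\mathbf{u}$ and polynomial in $\tau$), and the cycle length $t_{\text{on}}[n+1]$ depends continuously on $\mathbf{u}[n],\mathbf{u}[n+1]$ through $\mathbf{E},\mathbf{F}$. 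Consequently the within-cycle curve $\tau\mapsto\mathbf{D}_{n+1}(\mathbf{u}[n],\tau)$ converges, in Hausdorff distance, to one period of the equilibrium orbit $\boldsymbol{\Gamma}_e\triangleq\{\mathbf{x}_e(s):s\ge0\}$, and since $t_{\text{on}}[n+1]\to t_{\text{on}}^e$ the mismatch in the $\tau$-domains vanishes. Therefore $\textbf{dist}(\mathbf{x}(t),\boldsymbol{\Gamma}_e)\to0$ as $t\to+\infty$, which is exactly the conclusion of Definition~\ref{defstability}.

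The main obstacle is this last orbital-convergence step. During the transient the perturbed and equilibrium trajectories run at different cycle rates, so they drift in phase and cannot be compared pointwise at equal physical times; the conclusion must be read as convergence to the limit orbit in set (Hausdorff) distance rather than as $\norm{\mathbf{x}(t)-\mathbf{x}_e(t)}_2\to0$. Making this rigorous requires a modulus of continuity for $\mathbf{D}_{n+1}$ that is \emph{uniform} in the cycle index $n$, which I would obtain from the boundedness furnished by Lyapunov stability by restricting $(\mathbf{u},\tau)$ to a single fixed compact set, together with the convergence of the on-times to absorb the time-reparameterization between the two curves. Verifying that one such modulus controls every cycle simultaneously is the crux of the argument.
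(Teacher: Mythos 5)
Your proposal is correct and follows essentially the same route as the paper's proof: establish periodicity of $\mathbf{x}_e(t)$ from the constant on-time, relate the continuous-time perturbation at a current peak to the discrete perturbation through the invertible matrix $\mathbf{B}$, and then push $\mathbf{u}[n]\to\mathbf{u}_e$ through the reconstruction map using uniform bounds on $\mathbf{\Phi}_s,\mathbf{\Gamma}_s$ over a compact $\tau$-interval (the paper uses Weierstrass plus the mean value theorem exactly where you invoke a uniform modulus of continuity and convergence of the on-times to absorb the phase drift). Your use of $\norm{\mathbf{B}^{-1}}$ to map the continuous $\delta$ into the discrete basin is, if anything, the logically cleaner direction for verifying Definition~\ref{defstability}, but the substance of the argument is the same.
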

\begin{proof}

We denote the induced 2-norm \cite{horn1990matrix} for matrix $\mathbf{Z}$ by $\norm[1]{\mathbf{Z}}_M \triangleq \displaystyle\sup_{\mathbf{x} \neq \mathbf{0}}\frac{\|\mathbf{Z}\mathbf{x}\|_{2}}{\|\mathbf{x}\|_2}$.

i) We show  $\mathbf{x}_e(t)$ is an equilibrium. From (\ref{offtimerecon}), $ t_{\text{on}}^{\mathbf{e}}  \triangleq \left((v_e- V_{\text{in}})/V_{\text{in}}\right)T_{\text{off}} = t_{\text{on}}[n+1]$ for all $ n \ge 0$. Let $T = t_{\text{on}}^{\mathbf{e}}+T_{\text{off}}$. From (\ref{mappingA}) and (\ref{mappingD}), given any $ n \ge 0$,
\begin{align}
\mathbf{x}_e(\tau+nT+t_0) =& \mathbf{D}_{n+1}(\mathbf{u}[n],\tau)  \nonumber \\
=& \mathbf{D}_1(\mathbf{u}[0]=\mathbf{u}_e,\tau) = \mathbf{x}_e(\tau+t_0),
\end{align}
where $ 0<\tau\le T$. Therefore,  $\mathbf{x}(t+T) =  \mathbf{x}(t), \forall t \ge t_0$, where $t_0$ is from Definition \ref{classWtraj}. Because there is only one peak current during each $T$ period, $\mathbf{x}_e(t)$ is an equilibrium.

ii) We show that $\mathbf{x}_e(t)$ is an asymptotically stable equilibrium. Because $\mathbf{u}_e$ is an asymptotically stable equilibrium ISL, $\exists \; \delta_1>0$ such that if $\norm[1]{\mathbf{u}[0] - \mathbf{u}_e}_2<\delta_1$, then $\displaystyle\lim_{n\rightarrow +\infty} \mathbf{u}[n] =  \mathbf{u}_e$, where $\{\mathbf{u}[n]\}$ is the corresponding class $V$ trajectory after the perturbation. The corresponding class $W$ trajectory is $\mathbf{x}(t) = \mathbf{A}(\{\mathbf{u}[n]\})$.
For any perturbation on $\mathbf{x}_e(t)$ at $t=t_b$, $t_1=t_p[0] \ge t_b$ at a current peak of $x(t)$ and $t_2 = t_p^{\mathbf{e}}[0] \ge t^b$ at a current peak of $x_e(t)$ are both at discrete state space points.
\begin{equation}
\norm[1]{\mathbf{x}(t_1)-\mathbf{x}_{e}(t_2)}_2 =  \norm[1]{\mathbf{B}(\mathbf{u}[0] - \mathbf{u}_e)}_2 \le \norm[1]{\mathbf{B}}_M \norm[1] {\delta_1}_2.
\end{equation}
We show that if $x(t_1)$ lies in the $\delta_2$ neighborhood of $x_e(t)$ where $\delta_2 = \norm[1]{\mathbf{B}}_M \norm[1] {\delta_1}_2$, then given any $\epsilon_1 > 0$, there exists a $t_c > \text{max}\{t_1, t_2\}$ such that $\mathbf{dist}(\mathbf{x},\mathbf{x}_e(t)) \le \epsilon_1$ for all $t \ge t_c$.


Because every entry of $\mathbf{\Phi}_1(t,0)$, $\mathbf{\Phi}^{'}_1(t,0)$ and  $\mathbf{\Gamma}^{'}_1(t,0)$ are continuous functions of $t$, from Theorem 2.5.4 in \cite{horn1990matrix}, $\norm[1]{\mathbf{\Phi}_1(t,0)}_M$, $\norm[1]{\mathbf{\Phi}^{'}_1(t,0)}_M$, and  $\norm[1]{\mathbf{\Gamma}^{'}_1(t,0)}_2$ are continuous functions of $t$.

From the \emph{Weierstrass Theorem} in \cite{horn1990matrix}, in the closed interval $\left[0,2 t^{e}_{\text{on}}\right]$, there exists $M_1, L_1$ and $L_2$ such that
\begin{equation}  \label{boundml}
\norm[1]{\mathbf{\Phi}_1(t,0)}_M \leq M_1, \; \norm[1]{\mathbf{\Phi}_1^{'}(t,0)}_M \leq L_1, \; \norm[1]{\mathbf{\Gamma }_1^{'}(t,0)}_2 \leq L_2.
\end{equation}

Given any $\epsilon_1 > 0$, $\exists N_1$, $ \forall n \ge N_1$ such that
\begin{align}
\norm[1]{\mathbf{u}[n]-\mathbf{u}_e}_2 \le \text{min}\biggl\{ & \frac{\epsilon_1}{2M_1\norm[1]{\mathbf{\Phi}_0(T_{\text{off}},0)\mathbf{B}}}_M, \nonumber \\
\frac{V_{\text{in}}}{2L + T_{\text{off}}}t^{e}_{\text{on}}, \, & \frac{m_1 \epsilon_1}{2L_1(a_0+a_1+a_2)+L_2)} \biggl\},
\end{align}
where $a_0 = \norm[1]{\mathbf{\Phi}_0(T_{\text{off}},0)\mathbf{B}}_M\;(\ \norm[1]{\mathbf{u}_e}_2+\delta_1) $, $ a_1 = \norm[1]{\mathbf{\Phi}_0(T_{\text{off}},0)\mathbf{C}}_2 $ and $ a_2 = \norm[1]{\mathbf{\Gamma}_0(T_{\text{off}},0)}_2 $.

 Given any $t > \sum_{j=1}^{N_1} (T_{\text{off}}+t_{\text{on}}[j]) $, there exists $N_2 \ge N_1$ such that $\sum_{j=1}^{N_2} (T_{\text{off}}+t_{\text{on}}[j]) < t \le \sum_{j=1}^{N_2+1}(T_{\text{off}}+t_{\text{on}}[j])$. Let $\tau = t-\sum_{j=1}^{N_2} (T_{\text{off}}+t_{\text{on}}[j])$ and then $ 0 < \tau \leq T_{\text{off}} + t_{\text{on}}[N_2+1]$.

$t_{\text{on}}[N_2+1]$ is bounded in the closed interval $\left[0,2 t^{e}_{\text{on}}\right]$ because
\begin{align} \label{boundu}
  & \biggr | t_{\text{on}}[N_2+1] - t^{e}_{\text{on}}\biggr| \le \nonumber \\ 
  & \left| \frac{i_p [N_2 + 1] -i_p[N_2]}{m_1} \right| + \left| \left(\frac{v_{\text{out}}[N_2] - V_e}{m_1 L}\right) T_{\text{off}}\right| \le \nonumber \\
  & \frac{L}{V_{\text{in}}} \norm[1]{\mathbf{u}[N_2+1]-\mathbf{u}_e}_2 + \frac{L}{V_{\text{in}}} \norm[1]{\mathbf{u}[N_2]-\mathbf{u}_e}_2 \nonumber \\ 
  & + \frac{T_{\text{off}}}{V_{\text{in}}} \norm[1]{\mathbf{u}[N_2]-\mathbf{u}_e}_2 \le t^{e}_{\text{on}}.
\end{align}

If $\tau \leq T$, then
\begin{align}
&\textbf{dist}(\mathbf{x},\mathbf{x}_e(t)) \leq \norm[1] {\mathbf{\Phi}_1(\tau, T_{\text{off}}) \mathbf{\Phi}_0(T_{\text{off}},0)\mathbf{B}(\mathbf{u}[N_2]-\mathbf{u}_e)}_2 \leq \nonumber \\
&M_1  \norm[1]{\mathbf{\Phi}_0(T_{\text{off}},0) \mathbf{B}}_M \norm{\mathbf{u}[N_2]-\mathbf{u}_e}_2 < \epsilon_1.
\end{align}

If $\tau > T$, then
\begin{align} \label{ineq10}
&\textbf{dist}(\mathbf{x},\mathbf{x}_e(t)) \leq 
\|
\mathbf{\Phi}_1(T,T_{\text{off}})\mathbf{\Phi}_0(T_{\text{off}},0)\mathbf{B}(\mathbf{u}[N_2]-\mathbf{u}_e) \nonumber \\
&+\left(\mathbf{\Phi}_1(\tau,T_{\text{off}})-\mathbf{\Phi}_1(T,T_{\text{off}})\right)\mathbf{\Phi}_0(T_{\text{off}},0) (\mathbf{B}\mathbf{u}[N_2]+\mathbf{C}) \nonumber \\
&+\left(\mathbf{\Phi}_1(\tau,T_{\text{off}})-\mathbf{\Phi}_1(T,T_{\text{off}})\right)\mathbf{\Gamma}_0(T_{\text{off}},0) \nonumber \\ 
&+ \mathbf{\Gamma}_1(\tau,T_{\text{off}})-\mathbf{\Gamma}_1(T,T_{\text{off}})
\|_2.
\end{align}
From the \emph{Mean Value Theorem} in \cite{dieudonne2013foundations}, 
\begin{align} \label{mvt}
\norm[1]{\mathbf{\Phi}_1(\tau,T_{\text{off}})-\mathbf{\Phi}_1(T,T_{\text{off}})}_M \le L_1 |\tau - T|, \nonumber\\
\norm[1]{\mathbf{\Gamma}_1(\tau,T_{\text{off}})-\mathbf{\Gamma}_1(T,T_{\text{off}})}_2 \le L_2 |\tau - T|.
\end{align}
From (\ref{ineq10}) and (\ref{mvt}),
\begin{align}
\textbf{dist}(\mathbf{x},&\mathbf{x}_e(t)) \leq M_1  \norm[1]{\mathbf{\Phi}_0(T_{\text{off}},0) \mathbf{B}}_M \norm[1]{\mathbf{u}[N_2]-\mathbf{u}_e}_2 \nonumber\\
&+ \; (\tau - T)  L_1   \norm[1]     {\mathbf{\Phi}_0(T_{\text{off}},0)\mathbf{B}}_M\;(\ \norm[1]{\mathbf{u}_e}_2+\delta_1) \nonumber \\
&+ \; (\tau - T) L_1 \norm[1]{\mathbf{\Phi}_0(T_{\text{off}},0) \mathbf{C}}_2 \nonumber\\
&+ \; (\tau - T) L_1  \norm[1]{\mathbf{\Gamma}_0(T_{\text{off}},0)}_2 + (\tau - T)L_2 
\leq \epsilon_1.
\end{align}

Hence, we proved that for any $\epsilon_1 > 0$, there exists a $t_c = \sum_{j=1}^{N_1} (T_{\text{off}}+t_{\text{on}}[j])$ such that $\mathbf{dist}(\mathbf{x},\mathbf{x}_e(t)) \le \epsilon_1$ for all $t \ge t_c$. This implies $\displaystyle\lim_{t\rightarrow +\infty}  \textbf{dist}(\mathbf{x},\mathbf{x}_e(t)) = \boldsymbol{0}$.

In all, we proved that for $\mathbf{x}_e(t)$, $ \exists \ \delta_2$, such that if the perturbation on $\mathbf{x}_e(t)$ at $t_b$ results in $\norm[1]{\mathbf{x}(t_1)-\mathbf{x}(t_2)}_2<\delta_2$, then $\displaystyle\lim_{t\rightarrow +\infty}  \textbf{dist}(\mathbf{x},\mathbf{x}_e(t)) = \boldsymbol{0}$.

From Definition \ref{defstability}, the continuous-time state trajectory $\mathbf{x}_e(t)$ is an asymptotically stable least harmonics equilibrium of $\mathcal{S}$.
\end{proof}

The stability criterion in {\em 5S} is identical to the classical discrete\nobreakdash-time system stability theory. From root locus or Nyquist \cite{Kuo1970} stability criteria, we can provide a sufficient and necessary condition for a compensated discrete linear system: all closed\nobreakdash-loop poles are inside the open unit disk. 
\subsection{Transient Response}
The transient performance of a controlled system is usually characterized by the response to a step input \cite{kuo1987automatic}. 
For constant on(off)\nobreakdash-time dc\nobreakdash-dc converters, we analyze the output voltage step response. We focus on settling time and overshoot because they are among the two most important performance criteria for power converter designers.
In this section, we show that the control performance in \emph{5S} maps to the physical time.
\begin{figure}
    \centering
    \includegraphics[width = 7cm]{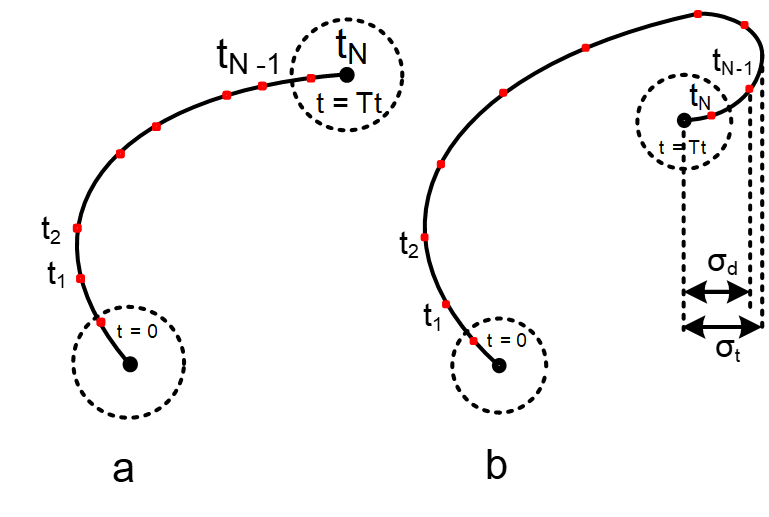}
    \caption{\label{fig:settling_overshoot} (a) Optimizing the transient cycles is equivalent to optimizing the transient time; (b) overshoot in continuous time is bounded from the top by the overshoot in 5S.}
\end{figure}
Given a closed\nobreakdash-loop constant on(off)\nobreakdash-time dc\nobreakdash-dc converter system with a reference output voltage step command $ V_{e1} \rightarrow  V_{e2}$, the \emph{settling time} $T_t$ is the transition time from one steady state to another and the \emph{settling cycles} $N_t$ is the number of turn\nobreakdash-on instants from one steady state to another.
The \emph{overshoot} for the sampled output\nobreakdash-voltage, peak inductor current, and output voltage is given by 
\begin{align}
    \label{eqn: v_is_d} \sigma_{d}^v & \triangleq \frac{\underset{n}{\mathrm{max}}\; v[n] -  V_{e2}}{V_{e2} - V_{e1}},\\ 
    \label{eqn: i_is_d}  \sigma_{d}^i & \triangleq \frac{\underset{n}{\mathrm{max}}\;i[n] -  I_{e2}}{I_{e2}-I_{e1}}, \\
    \label{eqn: v_is_t} \sigma_{t}^v & \triangleq \frac{\underset{t}{\mathrm{max}}\; v(t) - V_{e2}}{V_{e2} - V_{e1}}.
\end{align}

Because of the non\nobreakdash-uniform sampling, the settling steps and settling time are not proportional as illustrated in Fig.\,\ref{fig:settling_overshoot}(a). 
We mathematically proved that the optimization on the \emph{settling steps} in the \emph{5S} representation is equivalent to the optimization on the worst\nobreakdash-case \emph{settling time} in the physical state space. 

\begin{theorem}
\label{theoremsettlingtime}
Given a system $\mathcal{S}_1 \in \mathcal{S}$ and
$\mathbf{u}_e$ is an asymptotically stable equilibrium ISL.
Given a system $\mathcal{S}_1 \in \mathcal{S}$ in the reference output voltage step command $ V_{e1} \rightarrow  V_{e2}$,
the settling time $T_t$ is bounded by the following functions of the settling cycles $N_t$ 
\begin{align}
    T_{t} \le \rho N_{t} + \gamma,
\end{align}
where 
\begin{align}
    \rho(\sigma_d^v) &= T_1 + \frac{V_{e2}-V_{e1}}{V_{\text{in}}}T_{\text{off}}\sigma_d^v, \\
    \gamma &= \frac{L}{V_{\text{in}}}\left(I_{e2} - I_{e1} \right).
\end{align}
\end{theorem}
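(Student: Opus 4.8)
The plan is to turn the physical settling time into an explicit sum over the $N_{t}$ transient switching cycles, then split that sum into a current part that telescopes to $\gamma$ and a voltage part that is capped by the sampled-voltage overshoot $\sigma_{d}^{v}$. Since $T_{\text{off}}$ is constant and each of the $N_{t}$ cycles contributes exactly one off-interval and one on-interval, I would first write
\[
T_{t} = \sum_{n=1}^{N_{t}}\bigl(T_{\text{off}} + t_{\text{on}}[n]\bigr) = N_{t} T_{\text{off}} + \sum_{n=1}^{N_{t}} t_{\text{on}}[n].
\]
By the definition of settling (transition from one steady state to the next), the boundary peak currents are pinned to the two equilibria, $i_{p}[0] = I_{e1}$ and $i_{p}[N_{t}] = I_{e2}$.

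The key step is to substitute the on-time relation (\ref{offtimerecon}). Using $m_{1} = V_{\text{in}}/L$ and $m_{2}[n] = (v[n]-V_{\text{in}})/L$, which is legitimate under the small-ripple, CCM properties of class $\Sigma$, each on-time separates into a current-increment term and a voltage term,
\[
t_{\text{on}}[n] = \frac{L}{V_{\text{in}}}\bigl(i_{p}[n] - i_{p}[n-1]\bigr) + \frac{T_{\text{off}}}{V_{\text{in}}}\bigl(v[n] - V_{\text{in}}\bigr).
\]
Summing over the transient, the current-increment terms telescope to $\tfrac{L}{V_{\text{in}}}(i_{p}[N_{t}] - i_{p}[0]) = \tfrac{L}{V_{\text{in}}}(I_{e2}-I_{e1}) = \gamma$, while the $-T_{\text{off}}$ contributed per cycle cancels the leading $N_{t}T_{\text{off}}$, leaving the compact identity
\[
T_{t} = \gamma + \frac{T_{\text{off}}}{V_{\text{in}}}\sum_{n=1}^{N_{t}} v[n].
\]

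To finish, I would invoke the overshoot definition (\ref{eqn: v_is_d}), which rearranges to $\max_{n} v[n] = V_{e2} + \sigma_{d}^{v}(V_{e2}-V_{e1})$, and bound the remaining sum termwise by its peak value, $\sum_{n=1}^{N_{t}} v[n] \le N_{t}\max_{n} v[n]$. Substituting this and recognizing that $T_{1} = T_{\text{off}} V_{e2}/V_{\text{in}}$ is exactly the steady-state switching period at the final equilibrium $V_{e2}$ gives
\[
T_{t} \le \gamma + N_{t}\!\left(\frac{T_{\text{off}} V_{e2}}{V_{\text{in}}} + \frac{V_{e2}-V_{e1}}{V_{\text{in}}}T_{\text{off}}\sigma_{d}^{v}\right) = \rho N_{t} + \gamma,
\]
as claimed.

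The main obstacle I anticipate is careful bookkeeping rather than deep analysis. I must fix the index conventions of Definition \ref{defk3} so that the telescoping boundary terms land precisely on $I_{e1}$ and $I_{e2}$, and I must justify that the sampled voltage $v[n]$ is the correct quantity entering the off-time slope $m_{2}[n]$ — this leans on the small-ripple approximation and on the \emph{5S} linearization that produced (\ref{offtimerecon}). I would also confirm that replacing each $v[n]$ by the peak value over-estimates in the intended direction, i.e.\ that $\sigma_{d}^{v}(V_{e2}-V_{e1}) \ge 0$ for the step under consideration, so that the termwise bound genuinely upper-bounds the time and the inequality $T_{t} \le \rho N_{t} + \gamma$ holds.
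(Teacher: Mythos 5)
Your proposal follows essentially the same route as the paper: write $T_t$ as the sum of $N_t$ off-intervals and on-intervals, use the slope relation of (\ref{offtimerecon}) so that the peak-current increments telescope to $\gamma = \frac{L}{V_{\text{in}}}(I_{e2}-I_{e1})$ and the per-cycle $T_{\text{off}}$ terms cancel, and then bound the residual $\frac{T_{\text{off}}}{V_{\text{in}}}\sum_{k=1}^{N_t} v[k]$ using the sampled-voltage overshoot. The one substantive divergence is the final bounding step. The paper asserts $\sum_{k=1}^{N_t} v[k] \le N_t V_{e1} + \sigma_d^v N_t (V_{e2}-V_{e1})$, which makes $T_1$ the switching period at the \emph{initial} equilibrium ($T_1 = V_{e1}T_{\text{off}}/V_{\text{in}}$); but that inequality does not follow from Definition (\ref{eqn: v_is_d}), since $v[k]$ approaches $V_{e2} > V_{e1}$ during the transient and a trajectory that settles quickly and dwells near $V_{e2}$ violates it whenever $\sigma_d^v<1$. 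You instead bound termwise by $\max_n v[n] = V_{e2}+\sigma_d^v(V_{e2}-V_{e1})$, which is rigorous and yields $T_1 = V_{e2}T_{\text{off}}/V_{\text{in}}$, the period at the \emph{final} equilibrium. Since the theorem statement never defines $T_1$, your reading is the one under which the proof actually closes; the cost is a slightly looser constant than the paper claims. (Your final worry about the sign of $\sigma_d^v(V_{e2}-V_{e1})$ is moot: the termwise bound $v[k]\le\max_n v[n]$ holds unconditionally.)
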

\begin{proof}
Assume $\mathbf{u}_e=\begin{bmatrix}
V_e & I_e
\end{bmatrix}$ and the initial state is $\mathbf{u}[0]=[V_0,I_0]$.
\begin{align} \label{eq:1}
	T_t  = \sum_{k=0}^{N_t-1} (T_{\text{off}}+t_{\text{on}}[k+1]).
\end{align}
Based on property 5 and property 6 of a class $\Sigma$ power converters
\begin{align} \label{eq:2}
    \frac{V_{\text{in}}}{L} t_{\text{on}} [n] - \frac{v[n] - V_{\text{in}}}{L} T_{\text{off}} = i_p[n] - i_p[n-1].
\end{align}
Summing up both sides of (\ref{eq:2}) for $n = 1$ to $n = N_t$ results in
\begin{align} \label{eqn: sum_delta_i}
    \frac{V_{\text{in}}}{L} \left(T_t - N_t T_{\text{off}} \right) - \frac{ \sum_{k=1}^{N_t} v[k] - N_t V_{\text{in}}}{L} T_{\text{off}} = I_{e2} - I_{e1}.
\end{align}
From (\ref{eqn: v_is_d})
\begin{align} \label{eqn: sumv_os}
    \sum_{k=1}^{N_t} v[k] \le N_t V_{e1} + \sigma_d^v N_t \left(V_{e2} - V_{e1}\right)
\end{align}
Substituting (\ref{eqn: sumv_os}) into (\ref{eqn: sum_delta_i}) yields
\begin{align}
    T_t  \le & \frac{L}{V_{\text{in}}} \left( I_{e2} - I_{e1} \right) + T_{\text{off}}N_s + \frac{V_{\text{e1}} - V_{\text{in}}}{V_{\text{in}}} T_{\text{off}}N_s \nonumber \\
    & + \sigma_d^v \frac{V_{e2}-V_{e1}}{V_{\text{in}}}T_{\text{off}} N_s \nonumber \\ 
    = & \left(T_1 + \frac{V_{e2}-V_{e1}}{V_{\text{in}}}T_{\text{off}}\sigma_d^v \right) N_s + \frac{L}{V_{\text{in}}} \left( I_{e2} - I_{e1} \right)
\end{align}
\end{proof}

Because of the discretization, the overshoot in \emph{5S} is smaller than that in continuous time as illustrated in Fig.\,\ref{fig:settling_overshoot}(b).
We mathematically proved that the \emph{overshoot} in the physical time is bounded from the top by a linear function of the \emph{overshoot} in \emph{5S}.

\begin{theorem}\label{OvershootTheorem}
Given a system $\mathcal{S}_1 \in \mathcal{S}$ in a reference output voltage step command  $ V_{e1} \rightarrow  V_{e2}$, the overshoot in continuous time $\sigma_{t}^v$ is bounded by the following functions of the overshoot in $\emph{5S}$ $\sigma_{d}^v$
\begin{align}
  \sigma_{t}^v \le (1-(1-\lambda)\alpha) \sigma_{d}^v + (1-\lambda)\alpha \sigma_{d}^i,
\end{align}
where $\alpha = T_{\text{off}}/RC$.
\end{theorem}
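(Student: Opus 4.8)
The plan is to localize the continuous-time maximum of $v(t)$ to a single off-time interval, bound the voltage excursion past the sampling instant by the integrated capacitor-charging current, and then pass to the sampled overshoots through the definitions (\ref{eqn: v_is_d})--(\ref{eqn: v_is_t}). First I would show that $\max_t v(t)$ over the transient is attained during an off-time: for a boost converter in CCM the diode is reverse-biased while $S_1$ is on, so the output capacitor only discharges into $R$ and $v(t)$ is strictly decreasing on every on-time; hence the global maximum lies in some off-time, say the one beginning at a turn-off $t_p[n]$, whose sample is $v[n+1]$ taken at $t_s = t_p[n] + \lambda T_{\text{off}}$ and whose starting (peak) inductor current is $i_p[n]$.

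Next, within that off-time I would use (\ref{swoffstatetransfer}), where $v$ is a downward parabola: equivalently $\dot v = (i_L - v/R)/C$ with $\ddot v$ a negative constant, so $C\dot v = i_L - v/R$ decreases linearly, is positive up to the voltage peak $t^\ast$ (at which $i_L(t^\ast) = v_{\max}/R$), and the peak lies at or after $t_s$ precisely when $\dot v(t_s) \ge 0$. In that principal case the post-sample rise is controlled by the initial slope,
\begin{align}
  v_{\max} - v[n+1] = \frac1C\int_{t_s}^{t^\ast}\!\big(i_L - v/R\big)\,ds \le \frac{(1-\lambda)T_{\text{off}}}{C}\Big(i_L(t_s) - \frac{v[n+1]}{R}\Big),
\end{align}
since the integrand is decreasing and $t^\ast - t_s \le (1-\lambda)T_{\text{off}}$ (the boundary case $t^\ast = t_p[n]+T_{\text{off}}$ is covered identically). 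Substituting $i_L(t_s) \le i_p[n]$ and $\alpha = T_{\text{off}}/RC$ yields the convex-combination form $v_{\max} \le (1 - (1-\lambda)\alpha)\,v[n+1] + (1-\lambda)\alpha\,R\,i_p[n]$.

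I would then take the maximum over all cycles, replace $v[n+1]$ and $i_p[n]$ by $\max_k v[k]$ and $\max_k i_p[k]$, and substitute the overshoot definitions. The two coefficients sum to one, so after inserting (\ref{eqn: v_is_d})--(\ref{eqn: v_is_t}) the $V_{e2}$ and $R I_{e2}$ offsets should collapse back to $V_{e2}$ and the current scaling $R(I_{e2}-I_{e1})$ should match $V_{e2}-V_{e1}$, leaving $\sigma_t^v \le (1-(1-\lambda)\alpha)\,\sigma_d^v + (1-\lambda)\alpha\,\sigma_d^i$.

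The main obstacle is twofold. Geometrically, when the in-cycle peak precedes the sample ($\dot v(t_s) < 0$) the post-sample estimate no longer controls $v_{\max}$; there I would instead use $v_{\max} = R\,i_L(t^\ast) \le R\,i_p[n]$ together with a backward estimate from $t_s$ to recover the same convex-combination bound. Quantitatively, the crude step $i_L(t_s) \le i_p[n]$ must be reconciled with the steady-state current--voltage relation so that no spurious positive constant survives when $v[n+1]=V_{e2}$ and $i_p[n]=I_{e2}$; making the bound exact to the order of the small-ripple linearization of Section~\ref{sec:modeling_5s} at the equilibrium is the delicate part of the argument.
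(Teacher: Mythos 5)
Your proposal follows essentially the same route as the paper's proof: localize the continuous-time maximum to an off-time interval (since $v(t)$ only discharges during the on-time), bound the post-sample rise by the integrated capacitor current $i_c = i_L - v/R$ over at most $(1-\lambda)T_{\text{off}}$ with $i_L$ replaced by the maximal peak current, and then collapse the offsets via $V_{e}=I_{e}R$ and the overshoot definitions, the coefficients summing to one. The paper performs exactly this computation (its equations (\ref{voltbound})--(\ref{voltagebound})), and your extra care about the case where the in-cycle voltage peak precedes the sampling instant only makes explicit a corner case the paper glosses over.
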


\begin{proof}
From (\ref{eqn: v_is_d}) and (\ref{eqn: v_is_t}), $\sigma_{d}^v \leq \sigma_{t}^v$.
We want to bound $\sigma_{t}^v$ from above. Let $t_m={\mathrm{argmax}}\;v(t)$, $N={\mathrm{argmax}}\;v[n]$ and $K={\mathrm{argmax}} \; i_p[n]$. Assume $t_m$ 
is in the $M^{\text{th}}$ switching cycle. Let $\tau = t - t_p[M-1]$ and $\alpha_c = (1-\lambda)T_{\text{off}}/RC$.

In time interval $ 0 < \tau \leq T_{\text{off}}$, the capacitor voltage $v(\tau)$ increases with $\tau$. The charging current is the difference between charge injected by the current source and charge drained by the load. During the interval $ T_{\text{off}} < \tau \leq T_{\text{off}} + t_{\text{on}} [M] $, the capacitor is discharging to the load; therefore, the capacitor voltage $ v(\tau)$ decreases with $\tau$. Let $i_c(t)$ be capacitor current, then 
the maximum $v(t)$ is bounded by
\begin{align} \label{voltbound}
v(t_m) = & \;v[M] + \int_{\lambda T_{\text{off}}}^{T_{\text{off}}} \frac{i_c(\tau)}{C}\; d\tau  \nonumber \\
\le & v[M] + \int_{\lambda T_{\text{off}}}^{T_{\text{off}}} \frac{1}{C} \left( i_L(\tau) - \frac{v[M]}{R} \right)\; d\tau \nonumber \\
\le & \left( 1 - \alpha_c \right)v[N] + \frac{1}{C} \int_{\lambda T_{\text{off}}}^{T_{\text{off}}} i_L(\tau) d\tau \nonumber\\
\le & \;\left( 1 - \alpha_c \right)v[N] + (i_p[K]-I_{e1})\alpha_c R  + I_{e1}\alpha_c R.
\end{align}
From (\ref{eqn: i_is_d})
\begin{align} \label{ipk}
i_p[K]-I_{e1} =  (\sigma_d^i+1)(I_{e2}-I_{e1})
= (\sigma_d^i+1)\frac{V_{e2}-V_{e1}}{R},
\end{align}
\vspace{-15pt}
\begin{align}  \label{ipk2}
    V_{e1} =  I_{e1} R.
\end{align}
Substitute (\ref{ipk}) and (\ref{ipk2}) into (\ref{voltbound}) yields
\begin{align}  \label{voltagebound}
    & v(t_m) - V_{e1} \nonumber \\
    \le &  \left(1 - \alpha_c \right) \left( v[N] - V_{e1} \right)
    + (\sigma_d^i+1)(V_{e2}-V_{e1})\alpha_c. 
\end{align}
Substitute (\ref{eqn: v_is_d}) and (\ref{eqn: v_is_t}) into (\ref{voltagebound})
\begin{align}
\sigma_t^v \le & \; (1-\alpha_c)\sigma_d^v + \alpha_c \sigma_d^i.
\end{align}
\end{proof}

Therefore, the digital controller design in the time domain can be conveniently converted to a controller design in {\em 5S}. In other words, we easily transform a complicated hybrid\footnote{Combined continuous\nobreakdash-time states and discrete\nobreakdash-time states \cite{tabuada2009verification}.} optimization problem to a pure discrete\nobreakdash-domain optimization problem, which is much easier to design.
\section{Digital Controller Design in {\em 5S}} \label{sec:Digital_Controller_Des}
\subsection{Digital Controller Design for a COT\nobreakdash-CM Buck Converter}
We have shown that performance\nobreakdash-optimized
digital controller design in the time domain can be equivalently converted to a performance\nobreakdash-optimized controller design in {\em 5S}. Controller design problems in {\em 5S} can be solved by classical control methods. Root locus or Nyquist plots can be used for direct\nobreakdash-digital design of the controller because the plant is modeled as a $z$\nobreakdash-domain transfer function \cite{Kuo1970}. In a current\nobreakdash-mode buck converter with constant on\nobreakdash-time, the $z$\nobreakdash-domain transfer function is a second\nobreakdash-order system with one fast pole and one slow pole. For the controller, we add a pole at $z = 1$ for zero steady\nobreakdash-state error with a zero to comprise a proportional\nobreakdash-integral (PI) controller, which is all that is needed.

\begin{figure}
\centering
\includegraphics[width=8cm]{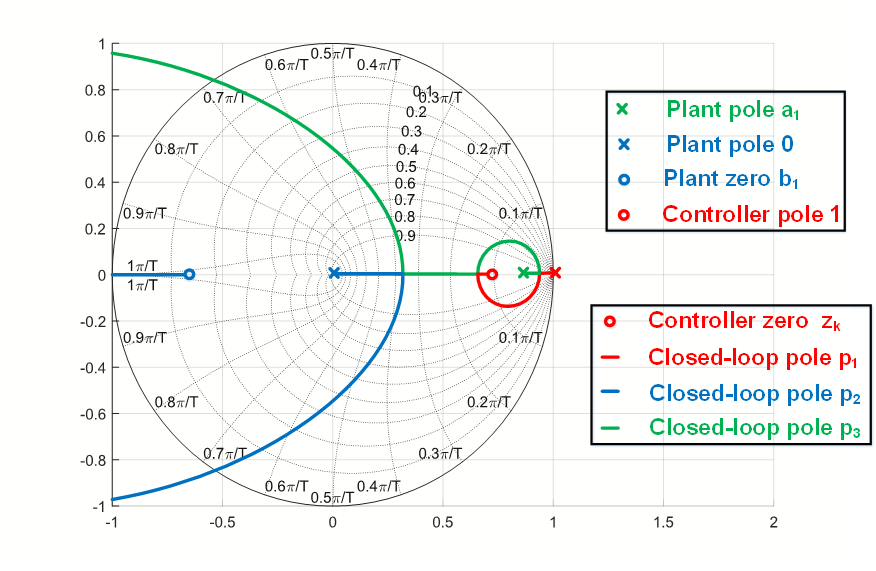}
\caption{\label{fig:rootlocus_buck} Closed\nobreakdash-loop root\nobreakdash-locus of PI compensated constant\nobreakdash-on\nobreakdash-time current\nobreakdash-mode controlled buck converter.} 
\end{figure}
We show a controller design example here for a current\nobreakdash-mode buck converter with constant on\nobreakdash-time with the hardware prototype specifications chosen with the goal of minimizing settling time with no overshoot. We design in {\em 5S} with the equivalent goal of minimizing the number of cycles for settling under a zero overshoot constraint.  The plant model is $0.0023(1+1.4390z^{-1})z^{-1}/(1-0.9739z^{-1})$ with a controller pole at $a=1$ for zero steady\nobreakdash-state error, zero at $b=0.9750$, and gain of $62$. The root locus is illustrated in Fig.\,\ref{fig:rootlocus_buck}, where it is apparent that the closed\nobreakdash-loop dynamics are dominated by a complex\nobreakdash-conjugate pole pair.
We verify the discrete\nobreakdash-time step later with the hardware.

\subsection{Digital Controller Design for a COT\nobreakdash-CM Boost Converter}
The open\nobreakdash-loop converter plant for a COT\nobreakdash-CM boost converter does not have a transient response that is fast enough to perform DVS for LiDAR because the output $RC$\nobreakdash-filter results in a slow pole. This shortcoming motivates us to develop a systematic compensation method. 

Any traditional discrete\nobreakdash-time controller can be transformed to {\em 5S} using an event\nobreakdash-driven sampler and actuator, which are updated synchronously with switching actions.
We illustrate a design example of a boost converter performing a voltage step tracking task. We use the root\nobreakdash-locus method as shown in Fig. \ref{fig:rootlocus_boost}. In our compensator, we first place a pole \mbox{$p_k = 1$} (integrator) to force the steady\nobreakdash-state voltage\nobreakdash-tracking error to be zero. Also, we need a zero $z_k$ as well as a gain $k$ to accelerate the transient response. The resulting {\em switching\nobreakdash-synchronized} proportional\nobreakdash-integral (S\textsuperscript{2}PI) compensator can be expressed as
\begin{align}
    K(z) = k \frac{1- z_k z^{-1}}{1-p_kz^{-1}}.
\end{align}

We compensate the slow open\nobreakdash-loop pole $a_1$ by placing a zero $z_k$ in its neighborhood.
The perceived optimal way is to achieve pole\nobreakdash-zero cancellation; however, this cancellation is never perfect in practice because of the uncertainty of $a_1$.
We choose $z_k$ to the left of the poles $z = a_1$ and $z = 1$.
From root locus rules, $z_k$ is the destination of the root locus leaving $a_1$.
The resulting closed\nobreakdash-loop pole $p_1$ stays on the real axis and in the neighborhood of $z_k$.
At high gain, the settling is determined by the zero $z_k$.
\begin{figure}
    \centering
    \includegraphics[width=8cm]{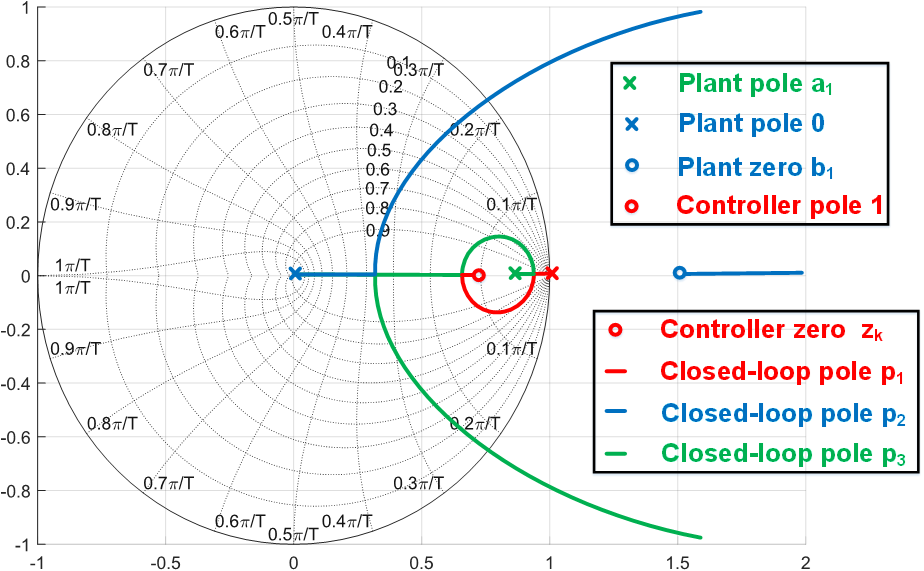}
    \caption{\label{fig:rootlocus_boost} 
     Root\nobreakdash-locus method for digital controller design in {\em 5S} of a COT\nobreakdash-CM boost converter.}
\end{figure}

\section{Hardware and Experimental Results} \label{sec:hardwaredesignguide}
\subsection{Buck Converter for VRM Applications}
We designed and built a current\nobreakdash-mode buck converter with constant on\nobreakdash-time that is controlled by a {\em 5S} digital controller, which is a prototype for dynamic voltage scaling and is shown in Fig. \ref{fig:prototype}.
The 1.8\,V output was selected based on the power requirements for a typical microprocessor \cite{TI2017}. The power level was chosen to be 20\,W based on the TDP (Thermal Design Power) value \cite{Intel2017}.
The switching frequency was chosen to be nominally 1.13\,MHz to demonstrate cycle\nobreakdash-by\nobreakdash-cycle digital control which can be challenging.  
Other circuit parameters are reported in Table \ref{t1}.
\begin{figure*}[htbp]
\centering
\includegraphics[width=12cm]{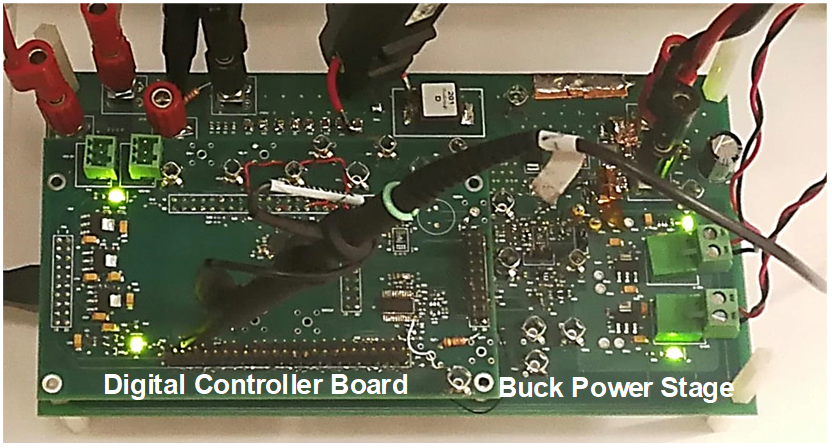}
\caption{\label{fig:prototype} Current-mode constant on-time buck converter prototype and digital control hardware.}
\end{figure*}
\begin{table}[htbp]
    \caption{Buck converter design parameters}
    \label{t1}
    \centering
    \begin{tabular}{cccccccccccccc}
        \toprule
            \textbf{Param.}&\textbf{Values}&\textbf{Param.}&\textbf{Values}\\
            \midrule
            $\boldsymbol{V_{\textbf{in}}}$
            & 8\,V & $\boldsymbol{L}$ & 200\,nH\\
            \midrule
            $\boldsymbol{V_{\textbf{out}}}$
            & 1.8\,V & Power & 20\,W \\
            \midrule
            $\boldsymbol{T_{\textbf{on}}}$ 
            & 100\,ns & $\boldsymbol{R_s}$ & 10\,m$\Omega$ \\
            \midrule
            $\boldsymbol{C}$ & 200 $\mu$F & $\boldsymbol{f_{sw}}$
            & 1.13\,MHz \\
            \midrule
            FPGA & Spartan-6 & ADC
            & LTC2378-16 \\
            \midrule
            MOSFET & IRF6620 & DAC & MAX5184 \\
            \midrule
            Diode & B520 & Diff. Op-Amp & LT1994 \\
        \bottomrule
    \end{tabular}
\end{table}


The prototype performs well for fast transient response in both small\nobreakdash-signal reference step and large\nobreakdash-signal reference step. In Fig.\,\ref{SS_exp}, a small\nobreakdash-signal step of 50\,mV shows a rise time as fast as 5\,$\mu$s with no overshoot.
\begin{figure}[htbp]
\centering
\includegraphics[width=\columnwidth]{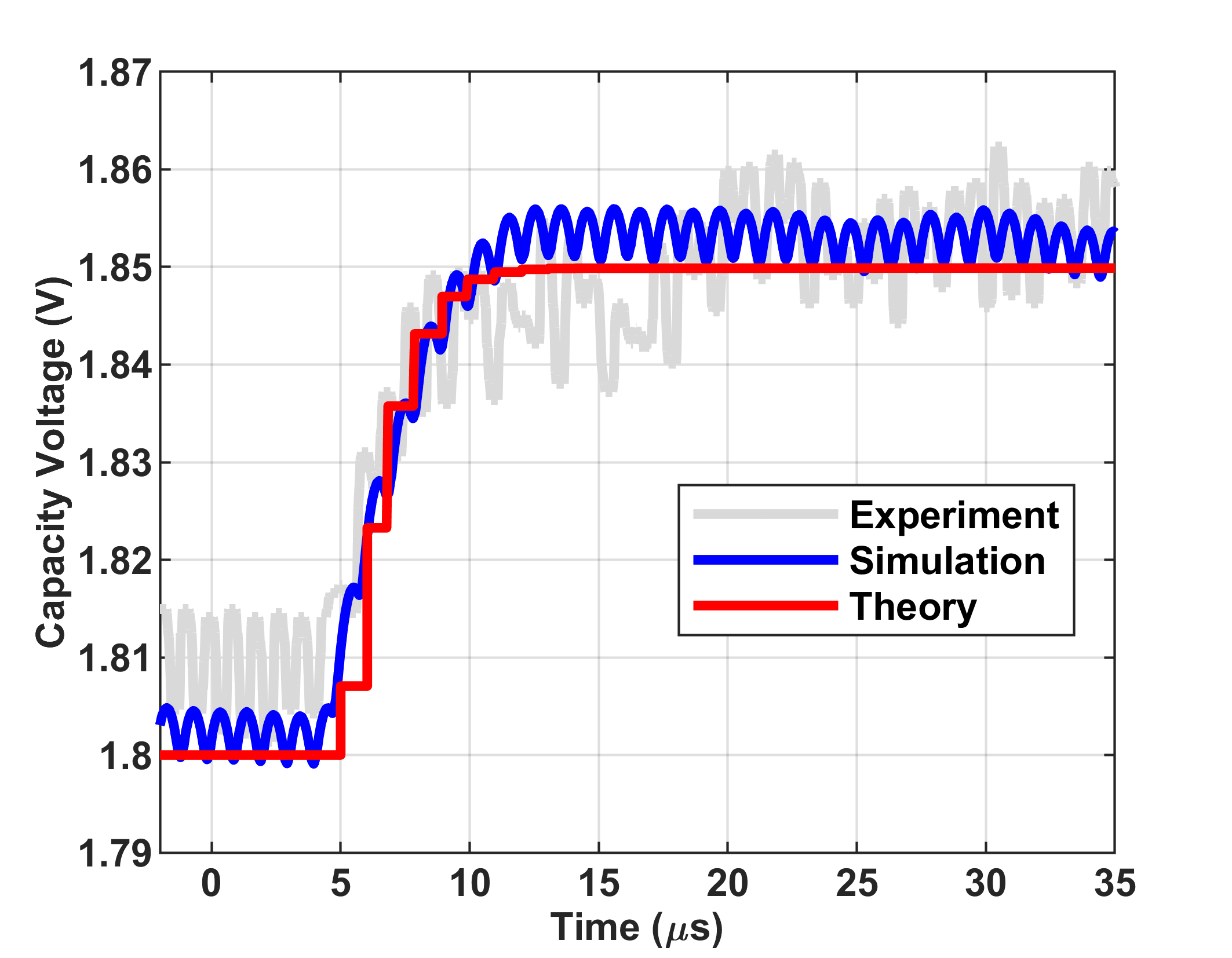}
\caption{\label{SS_exp} Small\nobreakdash-signal output capacitor voltage  response of a CM\nobreakdash-COT buck converter in reference voltage step.}
\end{figure}
\begin{figure}[htbp]
\centering
\includegraphics[width=\columnwidth]{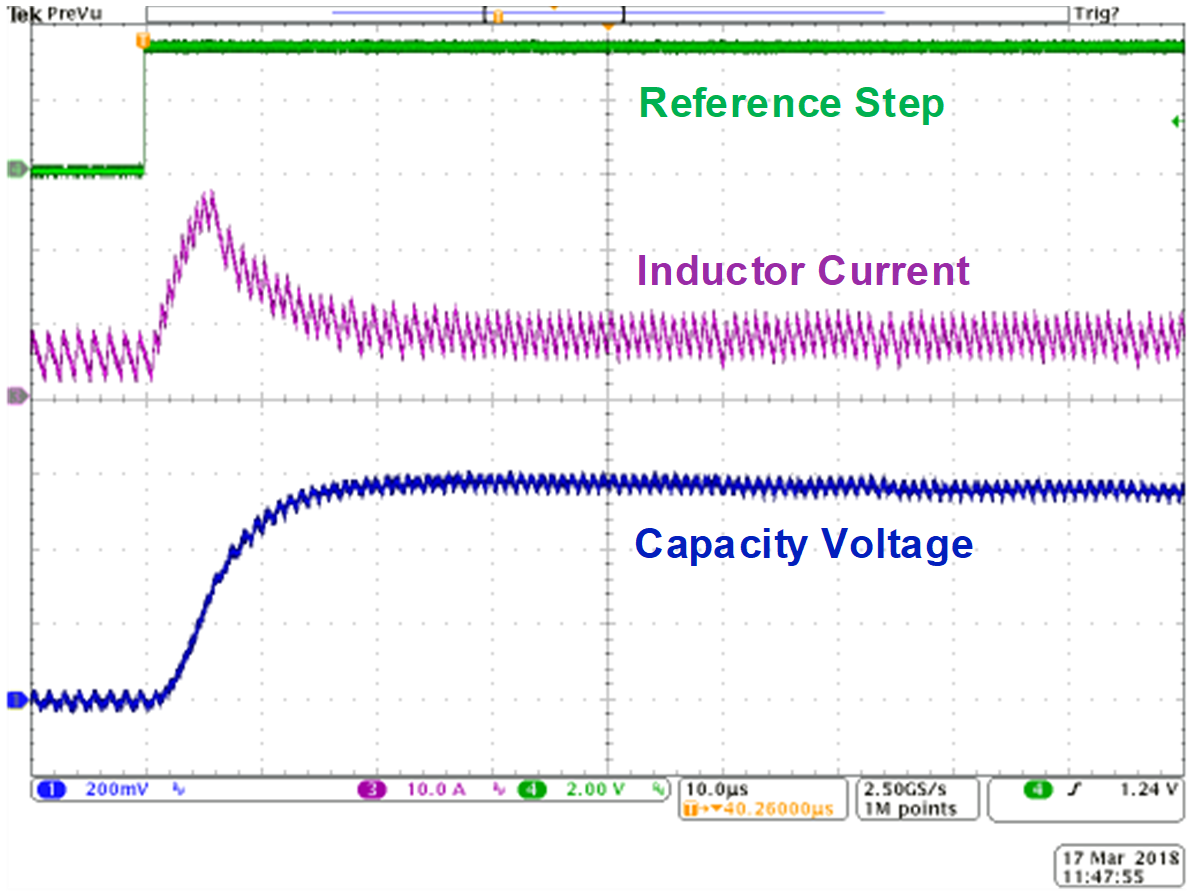}
\caption{\label{LS_exp} Large\nobreakdash-signal output voltage/inductor current response of a CM\nobreakdash-COT buck converter in reference voltage step.}
\end{figure}
Fig. \ref{SS_exp} shows good agreement between theory, simulation and experiment. The possible reason for the small discrepancy between theory and experiment might be attributed to our model assumption of a lossless circuit with real losses causing deviations in the duty ratio and switching frequency from the ideal. 

A large\nobreakdash-signal 0.5 V reference step to a 1.8 V set point was demonstrated to be stable despite inductor slew rate limiting. 
The large\nobreakdash-signal response in Fig.\,\ref{LS_exp} shows a rise time as fast as 8 $\mu$s with less than 3$\%$ overshoot.

\subsection{Boost Converters for LiDAR Applications} \label{section:sim and exp}
We designed and built a current\nobreakdash-mode boost converter with constant off\nobreakdash-time that is controlled by a {\em 5S} digital controller. Our COT\nobreakdash-CM boost regulator includes an analog peak\nobreakdash-current\nobreakdash-control circuit and digital voltage\nobreakdash-control loop as shown in Fig.\,\ref{fig:lidarpowersupply}.
The prototype shown in Fig.\,\ref{fig:prototype_compel2019} was constructed with the parameters in Table\,\ref{tab:boosttable} and controlled by an Artix\nobreakdash-7 FPGA from Xilinx with a 400\,MHz system clock. The control algorithm follows the flowchart in Fig. \ref{fig:flowchart}. One control cycle consists of approximately 80 FPGA clock cycles which determines the 200\,ns constant off\nobreakdash-time. The 12\,V input is a common voltage level in a vehicle. The output voltage was selected to be 40\,V based on the laser driver solution in \cite{Diodes2018}. The power level was set to be 16\,W based on a commercial product \cite{velodynehdl64E}. The peak switching frequency is 3\,MHz because LiDAR transmitters need to be more compact and portable \cite{velodynehdl64E} with high switching frequency largely shrinking the size and weight.

\begin{figure}[htbp]
    \centering
    \includegraphics[width=8cm]{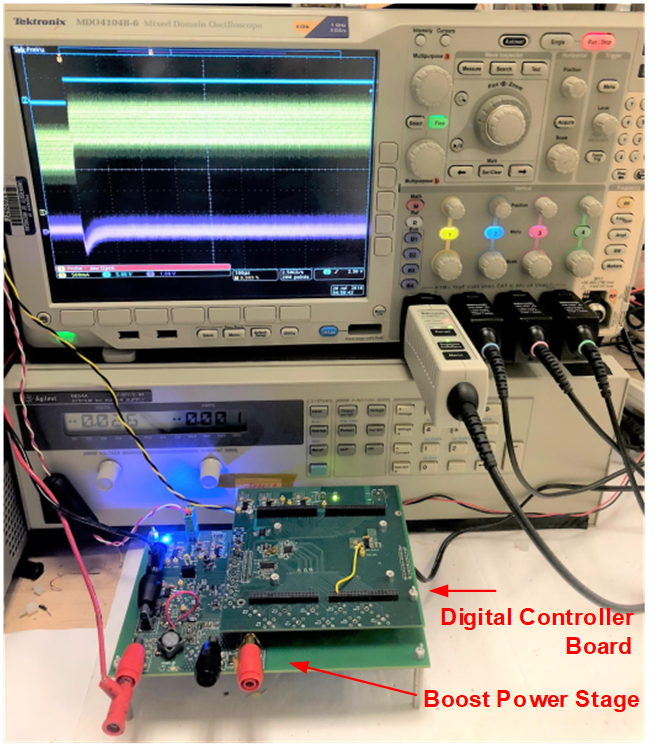}
    \caption{\label{fig:prototype_compel2019} Current-mode constant off-time boost converter and digital control hardware is under test.} 
\end{figure}
 \begin{figure}[htbp]
    \begin{center}
       \includegraphics[width=7cm]{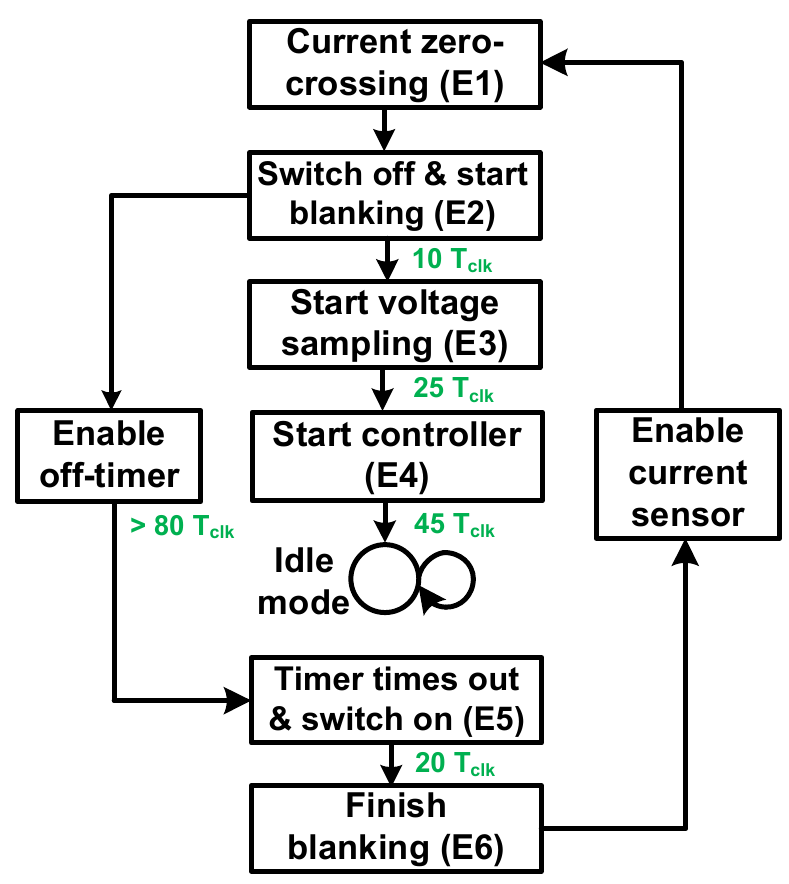}
    \end{center}
    \caption{\label{fig:flowchart} Switching\nobreakdash-synchronized sampling and control flowchart of a COT\nobreakdash-CM boost converter.}
\end{figure}

\begin{figure}[htbp]
    \begin{center}
       \includegraphics[width=\columnwidth]{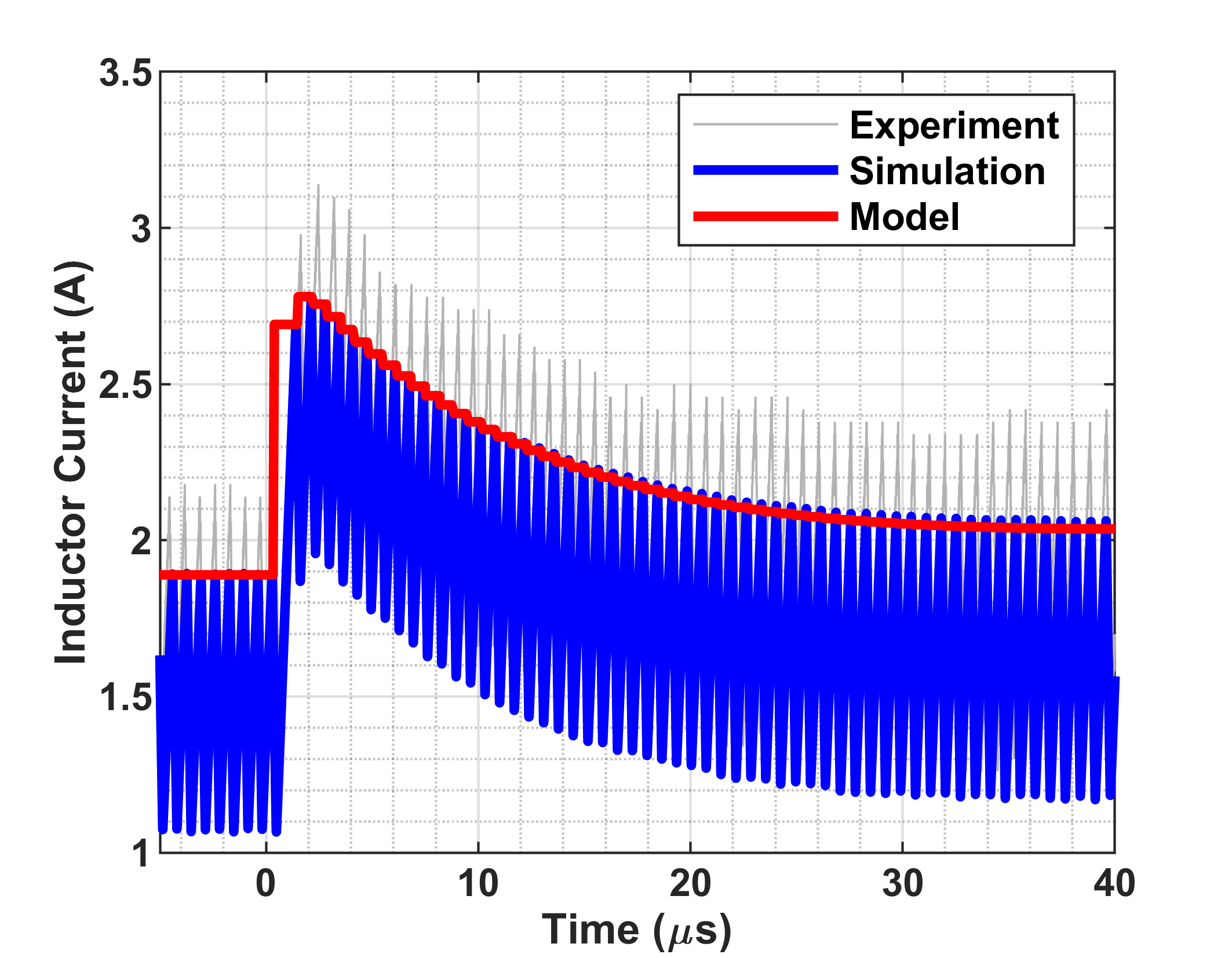}
    \end{center}
    \caption{ \label{fig:compinductor} Comparison of the inductor current $i_L$ waveform of a CM\nobreakdash-COT boost converter between the theory, simulation and experiment under output voltage step response.}
\end{figure}
\begin{figure}[htbp]
    \begin{center}
       \includegraphics[width=\columnwidth]{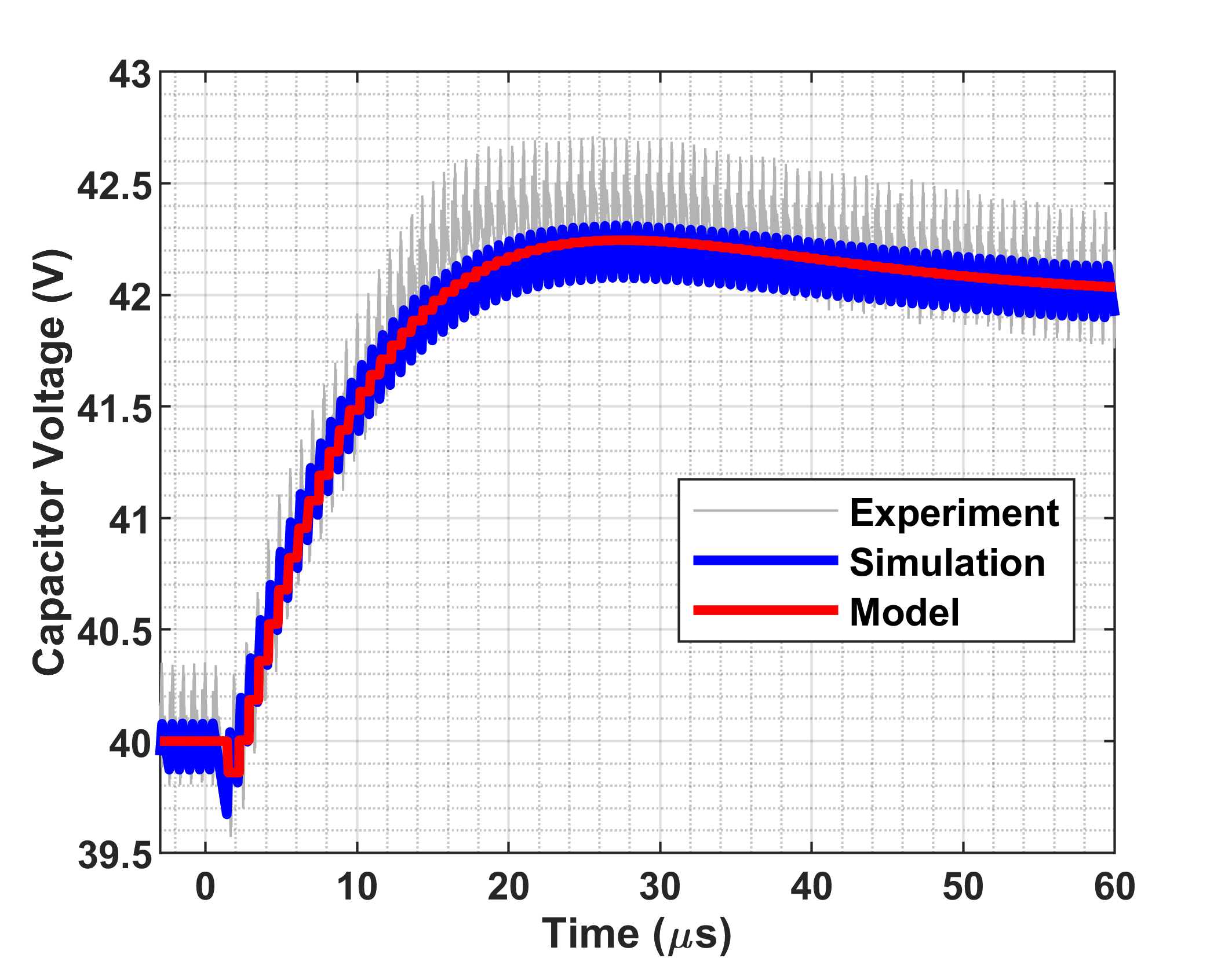}
    \end{center}
    \caption{\label{fig:compcapacitor} Comparison of the capacitor voltage $v_c$ waveform of a CM\nobreakdash-COT boost converter between the theory, simulation and experiment under output voltage step response.} 
\end{figure}
\begin{table}[tb]
    \caption{Design Parameters of the Constant Off\nobreakdash-Time Current\nobreakdash-Mode \\ Boost Converter Prototype} \label{tab:boosttable}
    \centering
    \vspace{5 pt}
    \begin{tabular}{cccccccccccccc}
        \toprule
    \textbf{Specifications/Items}&\textbf{Parameters}\\
    \midrule
    Peak Switching Frequency
    & 3 MHz\\
    \midrule
    Nominal Power
    & 16 W\\
    \midrule
    Input Voltage
    & 12 V\\
    \midrule
    Nominal Output Voltage
    & 40 V\\
    \midrule
    Off-Time
    & 200 ns\\
    \midrule
    $L$ / $C$
    &6.8 $\mu$H / 1 $\mu$F\\
    \midrule
    ADC / DAC
    & LTC2378-16 / MAX5184\\
    \midrule
    MOSFET / Diode
    & GS61004B / STPS1H100A\\
        \bottomrule
    \end{tabular}
\end{table}

Figs.\,\ref{fig:compinductor} and \ref{fig:compcapacitor} show good agreement between theory, simulation, and experiment. The theoretical current is offset from the experimental data by 10\%. This deviation is a result of the assumption in the theory that the converter is lossless; the prototype is instead 90\% efficient. The theoretical voltage matches the experimental data in steady\nobreakdash-state because of the integrator in the controller. 
The actual voltage ripple is approximately 200\,mV in Fig. \ref{fig:compcapacitor}.

The practical DVS task for LiDAR involves large\nobreakdash-signal voltage steps. 
The small\nobreakdash-signal model may cause large errors, as shown in Fig.\,\ref{fig:errorwpertubarion}, and not provide us with the desired transient response. 
To extend our small-signal approach to large\nobreakdash-signal voltage steps, we use \emph{gain scheduling}\cite{khalil2002nonlinear}. We select the output voltage $v_{\text{out}}$ as the \emph{scheduling variable} to parameterize the operating points of the COT\nobreakdash-CM boost converters.

To explain this technique, we first define the \emph{linearized region} of an operating point $u_e$ as a neighborhood of $u_e$ such that all dynamics occurring within this neighborhood can be well\nobreakdash-approximated by the linearized model at $u_e$. The basic principle of gain scheduling is dividing a large\nobreakdash-step control task into several small\nobreakdash-step sub\nobreakdash-tasks so that their linearized regions overlap with each other. We discretize the continuous operating\nobreakdash-point space of a COT\nobreakdash-CM boost converter into a discrete operating\nobreakdash-point set. At each critical operating point, a local controller is designed. This parameterized family of linear controllers can be easily stored as a look\nobreakdash-up table in the ROM of FPGA.
A supervisory controller first decomposes the voltage step task into multiple sub\nobreakdash-tasks. The local controller is activated and starts the sub\nobreakdash-task. The supervisory controller records and evaluates the voltage\nobreakdash-error series along the time. Once the error series is marked as ``settled\nobreakdash-down'', the supervisory controller governs the voltage references to the next voltage target and switches to the next local controller to trigger the next voltage step sub\nobreakdash-task.

Our digital implementation makes it quite easy for the the supervisory controller to cooperate with a family of local controllers. This is an important advantage over analog controllers.
The LiDAR transmitter, which is modeled as a resistive load in model (\ref{eqn:modifiedplantmodel}), is not necessarily a pure resistor. It is usually reasonable to assume that the dynamics of the LiDAR transmitter is so fast that its VI characteristics is instantaneous. Then given any electrical characteristics $i =
f(v)$, a new dynamic model can be constructed by replacing $R$ by $r = 1/\frac{df(v)}{dv}\Big |_{v=v_{\text{out}}}$.

The experimental large-signal staircase voltage steps from 20 V $\rightarrow$ 25 V $\rightarrow$ 30 V $\rightarrow$ 35 V $\rightarrow$ 40 V shown in Fig. \ref{DVSLiDAR} emulate a practical dynamic laser pulse peak power corresponding to 60~W $\rightarrow$ 80~W $\rightarrow$ 100~W $\rightarrow$ 115~W $\rightarrow$ 125~W. We use the gain scheduling from simulation for the hardware experiment. Each voltage step exhibits a rise time of approximately 5\,$\mu$s with small overshoot, which satisfies the dynamic performance requirements for state\nobreakdash-of\nobreakdash-the\nobreakdash-art LiDAR transmitter systems. A load step from 16\,W to 22.4\,W at 40\,V output voltage shown in Fig. \ref{LoadDisRejection} emulates a laser pulser repetition rate step from 700\,kHz to 1\,MHz. Under a large load disturbance, the maximum voltage deviation is 1\,V, which is within the 5\% capacitor bank discharge limit \cite{Diodes2018}.
 \begin{figure}[htbp]
    \begin{center}
       \includegraphics[width = \columnwidth]{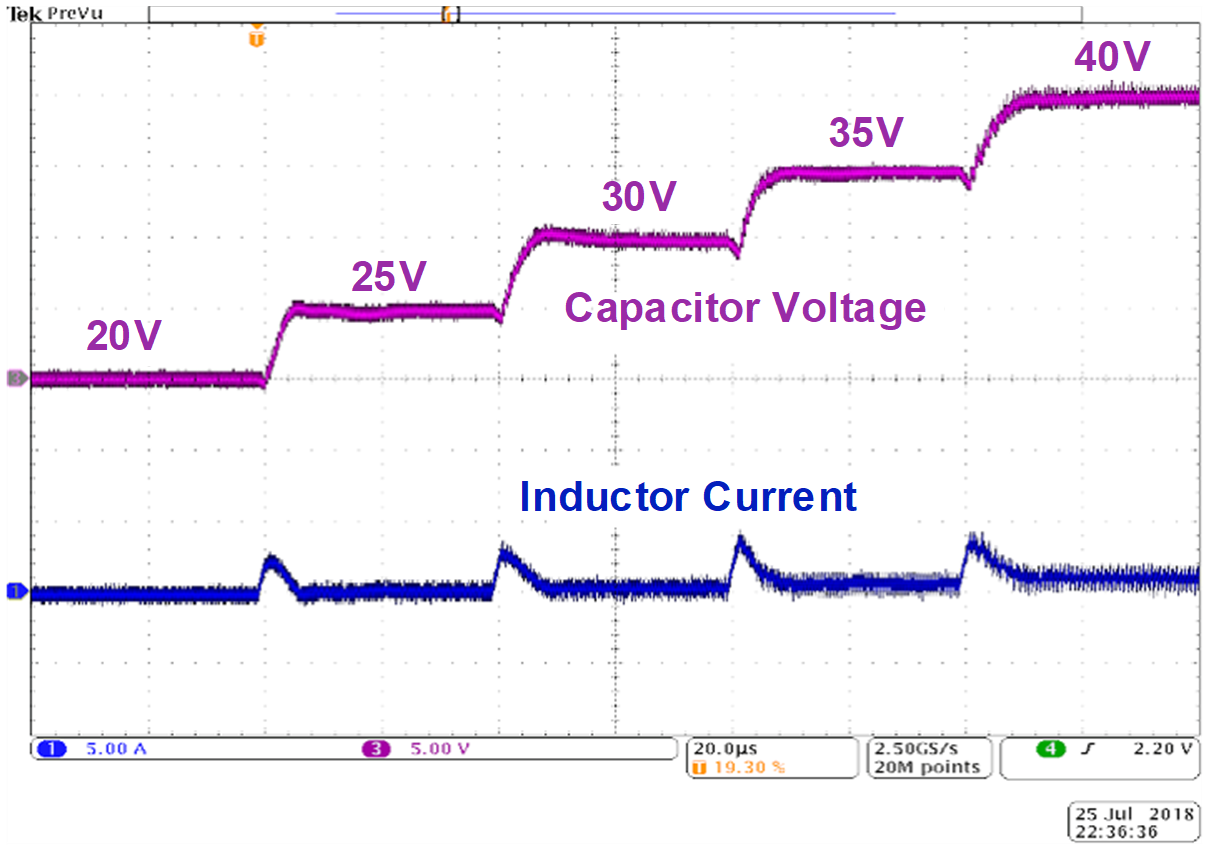}
    \end{center}
    \caption{\label{DVSLiDAR} Dynamic laser pulse energy scaling.} 
\end{figure}
\begin{figure}[htbp]
    \begin{center}
       \includegraphics[width=\columnwidth]{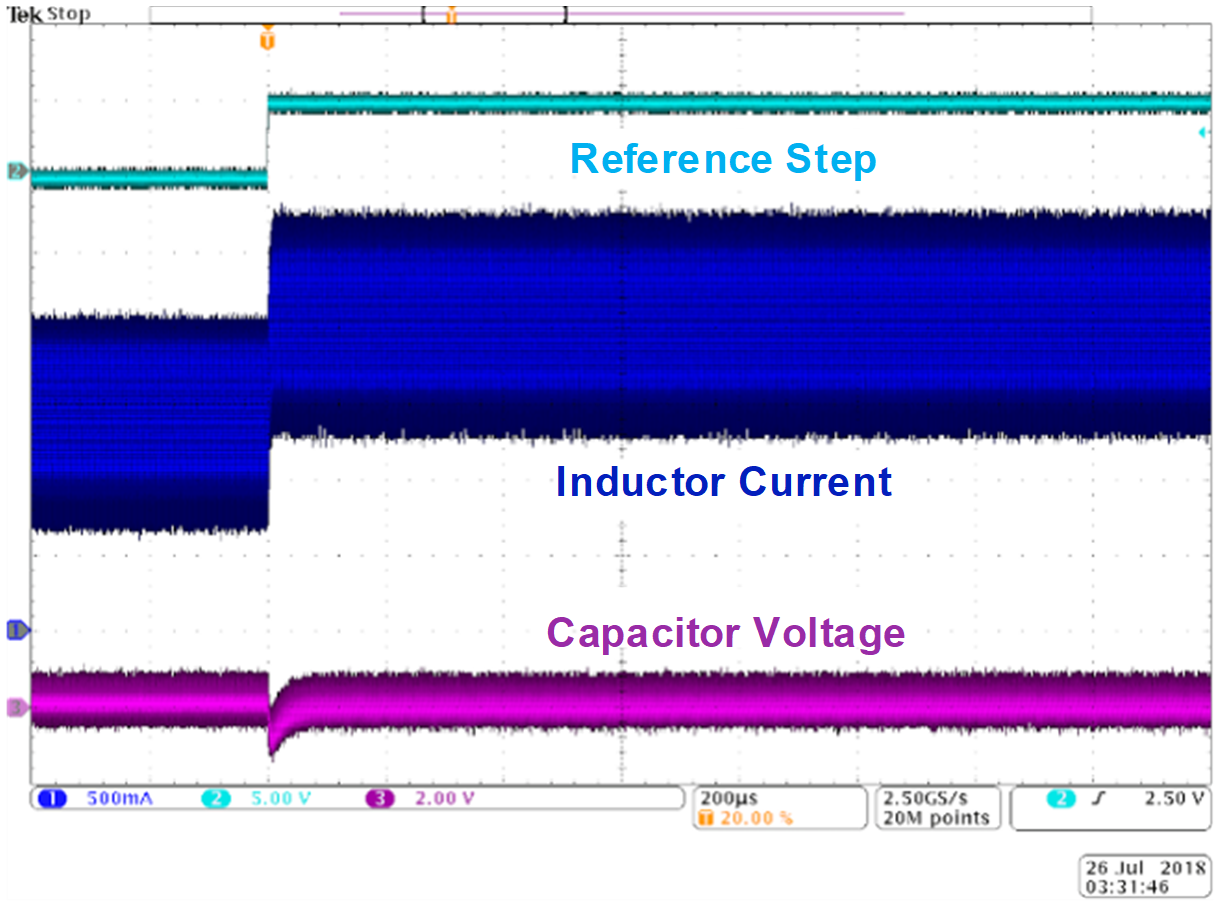}
    \end{center}
    \caption{\label{LoadDisRejection} Laser pulser repetition rate step.} 
\end{figure}
\section{Conclusion} \label{sec:conclusion_5s}
In this paper, we presented a new model and design methodology for performing switching cycle event\nobreakdash-driven digital control on variable\nobreakdash-frequency dc\nobreakdash-dc converters. We demonstrated an accurate model for a dc\nobreakdash-dc converter plant in a non\nobreakdash-periodic sampled state space.
We illustrated a method for designing a switching\nobreakdash-synchronized controller for a given plant. Closed\nobreakdash-loop system performance from an analytical model is verified through simulations and experiments. Dynamic voltage scaling as an application for dc\nobreakdash-dc converters using a switching\nobreakdash-synchronized controller can largely improve the energy efficiency of processors, memories, communications circuits, LiDAR power supplies, among others, by responding to the varying energy demand at an extremely fast speed and a very flexible way.

\section*{Acknowledgements}
This work was supported in part by the U.S. Department of Energy SunShot Initiative, under Award Number(s) DE\nobreakdash-EE\nobreakdash-0007549. 

{\setstretch{1}\vspace{\baselineskip}
\bibliographystyle{ieeetr}
\bibliography{library_fixed.bib}
}

\end{document}